\newtheorem{theorem}{Theorem}[section]
\newtheorem{corollary}[theorem]{Corollary}
\newtheorem{lemma}[theorem]{Lemma}
\theoremstyle{definition}
\newtheorem{definition}[theorem]{Definition}
\theoremstyle{remark} \theoremstyle{remark}
\numberwithin{equation}{section}
\newcounter{robA}
\newcounter{robB}
\newcounter{robC}
\newcounter{robD}
\newcounter{robE}
\newcounter{robF}
\newcounter{robG}
\newcounter{robTrzyA}
\newcounter{robTrzyB}
\newcounter{robCztA}
\newcounter{robCztB}
\newcounter{robPiecA}
\newcounter{robPiecB}
\def\trojkat(#1,#2,#3){
    \setcounter{robA}{#3}
    \divide\c@robA2
    \put(#1,#2){\line(1,0){#3}}
    \put(#1,#2){\line(2,3){\value{robA}}}
    \setcounter{robB}{#1}
    \addtocounter{robB}{#3}
    \put(\value{robB},#2){\line(-2,3){\value{robA}}}
}
\def\trojkatABC(#1,#2,#3){
    \trojkat(#1,#2,#3)
    \setcounter{robA}{#1}
    \addtocounter{robA}{-8}
    \setcounter{robB}{#2}
    \addtocounter{robB}{-5}
    \put(\value{robA},\value{robB}){$a$}
    \setcounter{robC}{#3}
    \addtocounter{robC}{3}
    \addtocounter{robC}{#1}
    \put(\value{robC},\value{robB}){$b$}
    \addtocounter{robA}{10}
    \setcounter{robD}{#3}
    \divide\c@robD2
    \addtocounter{robA}{\value{robD}}
    \divide\c@robD2
    \multiply\c@robD3
    \addtocounter{robD}{2}
    \addtocounter{robD}{#2}
    \put(\value{robA},\value{robD}){$c$}
}
\def\trojkatLevelDwa(#1,#2,#3){
\trojkat(#1,#2,#3)
\setcounter{robF}{#1}%
\addtocounter{robF}{#3}%
\trojkat(\value{robF},#2,#3)
\setcounter{robE}{#3}
\divide\c@robE2
\addtocounter{robE}{#1}
\setcounter{robG}{#3}
\multiply\c@robG3
\divide\c@robG4
\addtocounter{robG}{#2}
\trojkat(\value{robE},\value{robG},#3)
}
\def\trojkatLevelTrzy(#1,#2,#3){
\trojkatLevelDwa(#1,#2,#3)
\setcounter{robTrzyA}{#1}
\addtocounter{robTrzyA}{#3}
\addtocounter{robTrzyA}{#3}
\trojkatLevelDwa(\value{robTrzyA},#2,#3)
\setcounter{robTrzyB}{#3}
\multiply\c@robTrzyB3
\divide\c@robTrzyB2
\addtocounter{robTrzyB}{#2}
\addtocounter{robTrzyA}{-#3}
\trojkatLevelDwa(\value{robTrzyA},\value{robTrzyB},#3)
}
\def\trojkatLevelCztery(#1,#2,#3){
\trojkatLevelTrzy(#1,#2,#3)
\setcounter{robCztA}{#3}
\multiply\c@robCztA4
\addtocounter{robCztA}{#1}
\trojkatLevelTrzy(\value{robCztA},#2,#3)
\setcounter{robCztB}{#3}
\multiply\c@robCztB3
\addtocounter{robCztB}{#2}
\addtocounter{robCztA}{-#3}
\addtocounter{robCztA}{-#3}
\trojkatLevelTrzy(\value{robCztA},\value{robCztB},#3)
}
\def\trojkatLevelPiec(#1,#2,#3){
\trojkatLevelCztery(#1,#2,#3)
\setcounter{robPiecA}{#3}
\multiply\c@robPiecA8
\addtocounter{robPiecA}{#1}
\trojkatLevelCztery(\value{robPiecA},#2,#3)
\setcounter{robPiecB}{#3}
\multiply\c@robPiecB6
\addtocounter{robPiecB}{#2}
\addtocounter{robPiecA}{-#3}
\addtocounter{robPiecA}{-#3}
\addtocounter{robPiecA}{-#3}
\addtocounter{robPiecA}{-#3}
\trojkatLevelCztery(\value{robPiecA},\value{robPiecB},#3)
}
\begin{document}

\title{Hierarchical Random Graphs Based  on Motifs}
\author{Monika Kotorowicz\thanks{Instytut Matematyki, Uniwersytet Marii Curie-Sk{\l}odwskiej,
     20-031 Lublin, Poland, e-mail:
        monika@hektor.umcs.lublin.pl} \and Yuri Kozitsky\thanks{Instytut Matematyki, Uniwersytet Marii Curie-Sk{\l}odwskiej,
     20-031 Lublin, Poland, e-mail:
        jkozi@hektor.umcs.lublin.pl}}
\maketitle

\begin{abstract}
Network motifs are characteristic patterns which occur in the
networks essentially more frequently than the other patterns. For
five motifs found in S. Itzkovitz, U. Alon, Phys. Rev.~E, 2005, 71,
026117-1, hierarchical random graph models are proposed in which the
motifs appear at each hierarchical level. A rigorous construction of
such graphs is performed and a number of their structural properties
are analyzed. This includes degree distribution, amenability,
clustering, and the small world property. For one of the motifs,
annealed phase transitions in the Ising model based on the
corresponding graph  are also studied.

\end{abstract}

\section{Introduction}
 In view of the complexity and unknown organizing
principles of large real-world networks, they usually are modeled by
means of random graphs,  the study of which traces back to P.
Erd\H{o}s and A. R\'enyi \cite{ErdosRenyi1960}. Many of such
networks contain characteristic patterns recurring essentially more
frequently than the other ones. These are \textit{network motifs}
\cite{Itzkovitz2003,ItzkovitzAlon2005,Matias,Milo2002}. Quite often
real networks are build up mostly of motifs, which thus can be
treated as constructing units for their modeling, cf.
\cite{Milo2002}. In \cite{ItzkovitzAlon2005}, the authors introduced
a random graph model based on some geometric principles
(constraints). Then they compared the appearance of eight elementary
three- and four-node patterns in their model with the same
characteristics of the Erd\H{o}s-R\'enyi random graph. It turned out
that five of these patterns are motifs for their model, but not for
the Erd\H{o}s-R\'enyi random graph, see Fig.~\ref{motifs}.

\begin{figure}[htbp]
\centering
\begin{picture}(300, 45)
\trojkat(0,0,50)
\trojkat(70,0,50)
\put(95,38){\line(1,0){25}}
\put(140,0){\framebox(37,37)}
\put(200,0){\framebox(37,37)}
\put(260,0){\framebox(37,37)}
\put(200,0){\line(1,1){37}}
\put(260,0){\line(1,1){37}}
\put(297,0){\line(-1,1){37}}
\end{picture}
\caption{Three and four node motifs $M_1, M_2, M_3, M_4, M_5$ found in \cite{ItzkovitzAlon2005}.} \label{motifs}
\end{figure}
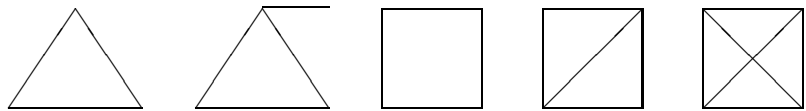

One of the ways to get information about infinite graphs is to study
the properties of certain models of statistical physics defined
thereon. The most popular ones are the Ising and Potts models, see
\cite{H}. On the other hand, graphs are employed to mimic crystal
lattices. For graphs with special structure, the critical behavior
of the Ising model can be described in an explicit and rigorous way.
This, in particular, holds for  the so-called hierarchical lattices
introduced in \cite{BO,GriffithsKaufman1982}. Such lattices are
constructed in an algorithmic way by means of basic patterns, e.g.,
by a `diamond', see $M_3$ in Fig.~\ref{motifs}. The relative
simplicity of the theory makes hierarchical lattices attractive in
studying critical point behavior of various types, see quite recent
works \cite{Ant,Aral} and the references therein. A mathematical
description of the Gibbs states of the Ising model on such graphs
was done by P.M. Bleher and E. \v{Z}alys in \cite{BleherZalys1988,
BleherZalys1989}. M. Hinczewski and A. Nihat Berker
\cite{HinczBerker2006} studied the critical point properties of the
Ising model on the diamond hierarchical lattice `decorated' by
adding random bonds. In the present paper, we follow the way
suggested in \cite{HinczBerker2006} and introduce hierarchical
graphs constructed by means of the motifs shown in
Fig.~\ref{motifs}, decorated by random bonds which somehow repeat
the corresponding motif. We analyze a number of their
characteristics, such as the average degree, the node degree
distribution, amenability, the small-world property. We also study a
ferromagnetic phase transition in the Ising model defined on the
graph based on $M_1$. A preliminary study of the models introduced
here was performed in \cite{Monika,Wr}.

\section{The Graphs: Construction and Structural Properties}
\subsection{The construction: informal description}

As is typical for hierarchical graphs, e.g., for hierarchical
lattices in \cite{GriffithsKaufman1982,HinczBerker2006}, the
construction is carried out in an algorithmic way: at $k$-th level,
$k\in \mathbb{N}$, one produces a subgraph, say $\Lambda_k$, which
is  then used as a construction element for producing
$\Lambda_{k+1}$. The procedure is the same at each level. The
starting element at level $1$ is obtained from the corresponding
motif. Let us illustrate this in the simplest case based on $M_1$ --
the triangle. To obtain $\Lambda_1$, we label the nodes of $M_1$ by
$a$, $b$, and $c$, as shown in Fig. \ref{unformalConstr}. The graph
$\Lambda_2$ is created in two step. First we take three graphs of
level $1$ and label them by $\Lambda_1^a$, $\Lambda_1^b$ and
$\Lambda_1^c$. Thereafter, the triangles are being glued up
according to the following rule: for $i,j \in \{a,b,c\}$, $i \neq
j$, node $i$ of triangle $\Lambda_1^j$ is glued up with node $j$ of
triangle $\Lambda_1^i$. The nodes $i$ od triangle $\Lambda_1^i$
remain untouched. These are the {\it external} nodes of $\Lambda_2$.
The remaining nodes are called {\it internal}. The bonds of the
initial triangles $\Lambda_1^i$, $i \in \{a,b,c\}$ turn into the
bonds of $\Lambda_2$. We call them {\it basic} bonds; they are
depicted as solid lines. At the second step, we add bonds connecting
the external nodes in the same way as it is in the motif $M_1$. Such
bonds are depicted as dotted lines and called {\it decorations}. As
a result, we obtain the graph $\Lambda_2$, which has nine basic bonds
and three decorations, three external and three internal nodes.

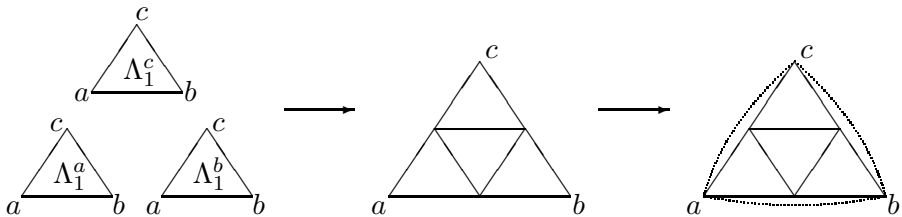
\begin{figure}[htbp]
\centering
\unitlength 0,23mm
\begin{picture}(500, 120)
    \put(10,0){\line(1,0){52}}
    \put(10,0){\line(2,3){26}}
    \put(62,0){\line(-2,3){26}}
    \put(1,-10){$a$}
    \put(62,-10){$b$}
    \put(26,38){$c$}
    \put(28,8){$\Lambda_1^a$}
    \put(90,0){\line(1,0){52}}
    \put(90,0){\line(2,3){26}}
    \put(142,0){\line(-2,3){26}}
    \put(81,-10){$a$}
    \put(142,-10){$b$}
    \put(119,38){$c$}
    \put(108,8){$\Lambda_1^b$}
    \put(50,60){\line(1,0){52}}
    \put(50,60){\line(2,3){26}}
    \put(102,60){\line(-2,3){26}}
    \put(40,55){$a$}
    \put(103,55){$b$}
    \put(75,102){$c$}
    \put(68,68){$\Lambda_1^c$}
\put(160,50){\vector(1,0){40}}
    \put(220,0){\line(1,0){52}}
    \put(220,0){\line(2,3){26}}
    \put(272,0){\line(-2,3){26}}
    \put(210,-10){$a$}
    \put(272,0){\line(1,0){52}}
    \put(272,0){\line(2,3){26}}
    \put(324,0){\line(-2,3){26}}
    \put(325,-10){$b$}
    \put(246,39){\line(1,0){52}}
    \put(246,39){\line(2,3){26}}
    \put(298,39){\line(-2,3){26}}
    \put(275,80){$c$}
\put(340,50){\vector(1,0){40}}
    \put(400,0){\line(1,0){52}}
    \put(400,0){\line(2,3){26}}
    \put(452,0){\line(-2,3){26}}
    \put(390,-10){$a$}
    \put(452,0){\line(1,0){52}}
    \put(452,0){\line(2,3){26}}
    \put(504,0){\line(-2,3){26}}
    \put(505,-10){$b$}
    \put(426,39){\line(1,0){52}}
    \put(426,39){\line(2,3){26}}
    \put(478,39){\line(-2,3){26}}
    \put(455,80){$c$}
   \qbezier[50](400,0)(410,40)(452,78)
   \qbezier[50](400,0)(453,-10)(504,0)
    \qbezier[50](452,78)(494,40)(504,0)
\end{picture}
\caption{Construction of the graph $\Lambda_2$ based on $M_1$ } \label{unformalConstr}
\end{figure}

To obtain $\Lambda_k$, $k=3,4,\ldots$, we repeat the same procedure
- take three copies of $\Lambda_{k-1}$ and label them by
$\Lambda_{k-1}^a$, $\Lambda_{k-1}^b$, and $\Lambda_{k-1}^c$. Then
the graphs $\Lambda_{k-1}^i$, $i \in \{a,b,c\}$ are glued up as
described above. Thereafter, three decorating bonds are drawn to
connect the external nodes. This procedure is repeated ad infinitum.

\subsection{Definitions}

In this subsection we begin performing the mathematical construction
of the model outlined above. In order to fix the terminology, we
recall relevant mathematical notions. A simple graph ${\sf G}$ is a
pair of sets $({\sf V}, {\sf E})$, where ${\sf V}$ is the set of
nodes, whereas ${\sf E}$ is a subset of the Cartesian product ${\sf
V} \times {\sf V}$. It is symmetric and irreflexive, i.e., $\langle
j, i \rangle \in {\sf E}$ whenever $\langle  i ,j\rangle \in {\sf
E}$, and  $\langle i, i \rangle \notin {\sf E}$ for every $i, j\in
{\sf V}$. We say that $i$ and $j$ are connected by a {\it bond} if
$\langle i, j \rangle \in {\sf E}$. In this case, we write $i \sim
j$ and say that $i$ and $j$ are {\it adjacent} or that they are {\it
neighbors}. Hence, the elements of ${\sf E}$ themselves can be
called bonds. The graph is said to be {\it complete}, if each two
nodes are adjacent. For a given $i$, by $n(i)$ we denote the {\it
degree} of $i$ -- the number of its neighbors. If ${\sf V}$, and
hence  ${\sf E}$, are finite, the graph is said to be finite.
Otherwise, the graph is infinite. An infinite graph is called {\it
locally finite}, if $n(i)$ is finite for every node.

Given ${\sf G}=({\sf V}, {\sf E})$ and ${\sf G}'=({\sf V}', {\sf
E}')$, let $\phi: {\sf V} \rightarrow {\sf V}'$ be such that $\phi
(i) \sim \phi(j)$ whenever $i\sim j$. Such a map $\phi$ is called a
{\it morphism}. A bijective morphism is called an {\it isomorphism}.
If $\phi$ is an isomorphism, then its inverse $\phi^{-1}$ is also an
isomorphism, and then the graphs ${\sf G}$ and ${\sf G}'$ are said
to be isomorphic. Such graphs have identical structures and thus can
be identified. In this case, we also say that ${\sf G}'$ is a {\it
copy} of ${\sf G}$. One observes that this refers to both finite and
infinite graphs. An isomorphism $\phi : {\sf V} \rightarrow {\sf
V}$, i.e. which maps the graph onto itself, is called an {\it
automorphism}. The graph ${\sf G}'=({\sf V}', {\sf E}')$ such that
${\sf V}' \subset {\sf V}$ and ${\sf E}' \subset {\sf E}$ is said to
be a {\it subgraph} of ${\sf G}=({\sf V}, {\sf E})$. In this case,
we write ${\sf G}' \subset {\sf G}$. Suppose that a subgraph ${\sf
G}' \subset {\sf G}$ has a copy, say ${\sf G}''$, that is, there
exists an isomorphism $\phi : {\sf G}'' \rightarrow {\sf G}'$. Then
$\phi$, considered as a map $\phi : {\sf G}'' \rightarrow {\sf G}$,
is called an {\it embedding} of ${\sf G}''$ into ${\sf G}$, whereas
${\sf G}'$ is called the {\it image} of ${\sf G}''$ under this
embedding. Fig.~\ref{motifs} presents the so called {\it unlabeled}
graphs, which are studied in this work. After labeling, i.e.,
attaching a label to each of the nodes, such a pattern turns into a
graph. Another labeling may or may not give the same graph up to an
automorphism. This depends on whether or not there exists the
corresponding automorphism.

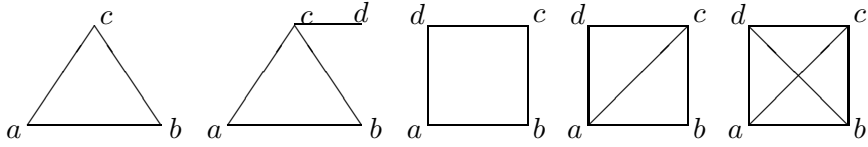
\begin{figure}[htbp]
\centering
\begin{picture}(300, 45)
\trojkatABC(0,0,50)
\trojkatABC(75,0,50)
\put(100,38){\line(1,0){25}}
\put(122,39){$d$}
\put(150,0){\framebox(37,37)}
\put(142,-5){$a$}
\put(189,-5){$b$}
\put(143,38){$d$}
\put(189,39){$c$}
\put(210,0){\framebox(37,37)}
\put(202,-5){$a$}
\put(249,-5){$b$}
\put(203,38){$d$}
\put(249,39){$c$}
\put(270,0){\framebox(37,37)}
\put(262,-5){$a$}
\put(309,-5){$b$}
\put(263,38){$d$}
\put(309,39){$c$}
\put(210,0){\line(1,1){37}}
\put(270,0){\line(1,1){37}}
\put(307,0){\line(-1,1){37}}
\end{picture}
\caption{Labelled graphs of level one based on motifs $M_1, M_2, M_3, M_4, M_5$.} \label{motifsGraphs}
\end{figure}
For instance, any labeling of the triangle $M_1$ yields the same
graph as in any case each of the nodes has the same neighbors. So
the triangle has six automorphisms.  For the pattern $M_2$, the
corresponding graph shown in Fig.~\ref{motifsGraphs} with the
interchanged labels $a$ and $b$ is the same. However, the graph with
the interchanged $c$ and $d$ is not the same anymore. Of course,
this new graph is isomorphic to the initial one. This is because
there is only one nontrivial automorphism of $M_2$: the one which
interchanges $a$ and $b$, and preserves $c$ and $d$.

Let  ${\sf G}' \subset {\sf G}$ and  ${\sf G}''\subset {\sf G}$ and
there exists an isomorphism $\phi : {\sf G}' \rightarrow {\sf G}''$.
Then we can consider $\phi$ as an equivalence  ${\sf G}' \sim {\sf
G}''$. The equivalence class of ${\sf G}'$ is defined as the set
$[{\sf G}']=\{{\sf G}'' : {\sf G}' \sim {\sf G}'' \}$. It is called
\textit{motif}. If the number of appearances of motif $[{\sf G}']$
in a network $ {\sf G}$ is higher than the number of its appearances
in the Erd\H{o}s-R\'enyi random graph, then $[{\sf G}']$ is called
\textit{network motif}.

Now we present the notion of a \textit{random} graph, which we use
in this work. The {\it random graph model} is defined to be a pair
consisting of an \textit{underlying} graph ${\sf G}=({\sf V}, {\sf
E})$ and a probability space $({\sf E},\mathcal{E}, { P})$. If ${\sf
G}$ is finite, as $\mathcal{E}$ one can take the set of all subsets
of ${\sf E}$. In the sequel, we deal with such random graph models
only. Thus, for ${\sf E}'\in \mathcal{E}$, we say that ${\sf E}'$
has been picked {\it at random} with probability $P({\sf E}')$.  In
many models,  the bonds are being picked independently  with
probability which may depend on the bond. In this case, one deals
with a random graph model with independent bonds. For such graphs,
\begin{equation}
  \label{prob}
  P({\sf E}') = \prod_{e\in {\sf E}'} p(e),
\end{equation}
where $p(e)$ is the probability of picking bond $e$. The set of
graphs $$\{{\sf G}' = ({\sf V}, {\sf E}')\}_{{\sf E}'\in
\mathcal{E}}$$ is called the graph {\it ensemble} -- each ${\sf G}'$
is being picked at random from this ensemble. Now suppose that we
have two random graph models with independent bonds. We have to
specify the definition of isomorphism for such graphs. Let  ${\sf
G}_1=({\sf V}_1, {\sf E}_1)$ and ${\sf G}_2=({\sf V}_2, {\sf E}_2)$
be their underlying graphs and $p_1, p_2$ be their  corresponding
probability (\ref{prob}). Then the map $\phi: {\sf V}_1 \rightarrow
{\sf V}_2$ is said to be the {\it isomorphism of the random graphs}
if there exists isomorphism $f:{\sf G}_1 \rightarrow {\sf G}_2$ (in
the meaning shown previously for non-random graphs) such, that for
every $\langle i, j \rangle \in {\sf E}$ we have
$$ p_1(\langle i,j\rangle)=p_2(\langle f(i),f(j)\rangle). $$

\subsection{The construction}

As was mentioned above, each of our graphs is constructed in an
algorithmic way from the corresponding motif presented in Fig.
\ref{motifs}. As they are random graphs with independent bonds,  we
have to construct the corresponding underlying graphs and to define
the probability of  picking the bonds, cf. (\ref{prob}). In all our
models, the bonds will be of two kinds, which we call \textit{basic
bonds} and \textit{decorations}. Basic bonds are non-random, i.e.,
picked with probability one. Decorating bonds appear with
probability $p\in [0,1]$, which is a parameter of the model. Now we
present the formal construction of the underlying graphs. Let $q$ be
the number of nodes in the corresponding motif, i.e., $q=3$ for
$M_1$ and $q=4$ for the remaining motifs. At step $k=1$, we just
label the nodes of the corresponding motif by $i=1, \dots , q$
and obtain the initial graph $\Lambda_1 = (V_1, E_1)$. All its bonds
are set to be basic. Suppose now that we have $q+1$ copies of
$\Lambda_1$ obtained by the isomorphisms $\phi_2^j$, $j= 0,1, \dots
, q$. Thus, in $j$-th copy the nodes are $\phi^j_2(i)$, $i=1, \dots
, q$. The graph $\Lambda_2$ is obtained from these copies under the
following conditions
\begin{equation}
  \label{gra2}
 \phi_2^0 (i) = \phi_2^i (i), \quad i=1, \dots , q; \qquad \phi_2^i(j) = \phi_2^j (i), \quad i=1, \dots , q, \ i\neq j.
\end{equation}
Thus, the images of $V_1$ under $\phi_2^i$ and $\phi_2^j$ with
$i\neq j$ intersect only at one node where (\ref{gra2}) holds. The
maps $\phi_2^j$, $j= 0,1, \dots , q$ embed $\Lambda_1$ into
$\Lambda_2$. The nodes $\phi_2^i(i)$, $i=1, \dots , q$, are called
the {\it external} nodes of $\Lambda_2$. All other nodes are called
{\it internal}. Thus, $\Lambda_2$ has $q$ external and $q(q-1)/2$
internal nodes. At this stage, we label them by $i= 1 , \dots ,
q(q+1)/2$ in such a way that the external nodes have the same labels
as in $\Lambda_1$, that is,  $\phi_2^i(i) = i$, $i=1, \dots q$. By
construction, the bonds obtained as images under the map $\phi_2^0$
are decorations: they are of the form  $\langle \phi^0_2 (i),
\phi^0_2 (j) \rangle$ where $i$ and $j$ are adjacent in $\Lambda_1$.
From the first condition in (\ref{gra2}) we see that the decorating
bonds connect the external nodes of $\Lambda_2$. The remaining bonds
of $\Lambda_2$ are set to be basic. Now we construct $\Lambda_k$ for
$k \geq 3$ from one copy of $\Lambda_1$ and $q$ copies of
$\Lambda_{k-1}$. Let $\phi_{k}^0$ be the map which produces the copy
of $\Lambda_1$ and $\phi_k^j$, $j=1, \dots ,q$ be the maps which
produce the copies of $\Lambda_{k-1}$. We then impose the conditions
\begin{equation}
  \label{gra3}
  \phi_k^0 (i) = \phi_k^i (i), \quad i=1, \dots , q; \qquad \phi_k^i(j) = \phi_k^j (i), \quad i=1, \dots , q, \ i\neq j
\end{equation}
and obtain $\Lambda_k$. Thus,  $\phi_k^0$ embeds $\Lambda_1
\rightarrow \Lambda_k$, and $\phi_k^i :\Lambda_{k-1} \rightarrow
\Lambda_k$, $i=1, 2,\dots, q$. As above, the nodes $\phi_k^i (i)$
are set to be external, and the remaining nodes are internal. The
images of $V_2$ under $\phi_k^i$ and $\phi_k^j$ with $i\neq j$
intersect only at one node where (\ref{gra3}) holds. Again we label
the nodes of $\Lambda_k$ is such a way that $\phi_k^i(i) = i$, $i=1,
\dots , q$. Now let us establish which bonds of $\Lambda_{k-1}$ are
decorating and which are basic. As above, the bonds connecting the
external nodes are decorating. The images of decorating bonds of
$\Lambda_{k-1}$ are decorating bonds in $\Lambda_k$; the same is
true also for the basic bonds -- the basic bonds of $\Lambda_k$ are
exactly the images of the basic bonds of $\Lambda_{k-1}$. As above,
by $V_k$ and $E_k$ we denote the sets of nodes and bonds of
$\Lambda_k$, respectively. Thus, for $k\geq 2$ we have $E_k = E_k'
\cup E_k''$, where $E_k'$ (respectively, $E_k''$) consists of basic
(respectively, decorating) bonds. All $\Lambda_k$, $k\in
\mathbb{N}$, are considered as subgraphs of an infinite graph
$\Lambda_\infty$, the structure and properties of which are not
important for the study presented in this article.

Note that the construction principle used above  essentially differs
from that used in
\cite{GriffithsKaufman1982,BleherZalys1988,BleherZalys1989,HinczBerker2006}.
Namely, in our case to obtain $\Lambda_k$ one replaces each {\it
node} of the basic pattern by a copy of the graph $\Lambda_{k-1}$.
In the hierarchical lattices, one replaces a {\it bond}. As we shall
see in the sequel, this leads to essentially different properties of
the resulting graphs. Below in Fig. \ref{construction}, we illustrate
the construction described above for the case where the basic
pattern is the motif $M_1$. In this case, the {\it bare} graph
(which occurs for $p=0$) is the approximating graph for the
Sierpi\'nski triangle. The elements of $E'_2$ (middle graph) and of
$E'_3$ (right-hand graph) are depicted as solid lines, whereas the
elements of $E''_2$  and of $E''_3$ appear as dotted lines. We omit
some dotted lines to indicate that they are random and hence may be
absent in a given realization of the graph. Note that $\Lambda_3$
can be viewed as the triangle composed from three copies of
$\Lambda_2$. In Fig. \ref{motif2}, we present the construction of
the bare graph $\Lambda_3$ corresponding to $M_2$. In contrast to
the former case, it is not a planar graph. In Fig. 4, we construct
the bare graph $\Lambda_2$ for motif $M_3$. One observes that in
this picture the node $c$ of the lower left-hand quadrat (i.e.
quadrat $a$) is glued up with node $a$ of the upper right-hand
quadrat. It is interesting that the corresponding fractal can be
obtained by the following procedure, resembling the one which yields
the Sierpi\'nski triangle. One takes the full quadrat and cuts it
out into four equal quadrates,  not cutting the external lines. Then
one glues up the vertices of the smaller quadrates as depicted and
proceeds with cutting out the smaller quadrates. The fractal which
one obtains from $M_5$ is a three dimensional version of the
Sierpi\'nski triangle. One takes the full tetrahedra and cuts out
its inner one fourth in such a way that the remaining four
tetrahedrae are glued up according to the rule: vertex $b$ of
tetrahedra $a$ is glued up with vertex $a$ of tetrahedra $b$, etc.

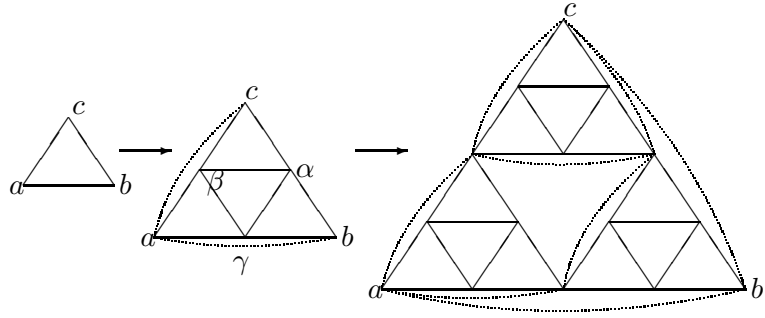
\begin{figure}[htbp]
\centering
\unitlength 0,23mm
\begin{picture}(400, 180)
\trojkatABC(10,70,52)
\put(65,90){\vector(1,0){30}}
\trojkatLevelDwa(85,40,52)
\put(77,35){$a$}
\put(192,35){$b$}
\put(130,22){$\gamma$}
\put(137,120){$c$}
\put(115,67){$\beta$}
\put(166,75){$\alpha$}
\qbezier[50](85,40)(93,80)(135,116)
\qbezier[50](85,40)(135,30)(189,40)
\put(200,90){\vector(1,0){30}}
\trojkatLevelDwa(215,10,52)
\trojkatLevelDwa(319,10,52)
\trojkatLevelDwa(267,88,52)
\put(207,5){$a$}
\put(426,5){$b$}
\put(319,168){$c$}
\qbezier[50](215,10)(223,50)(267,88)
\qbezier[50](215,10)(265,0)(319,10)
\qbezier[50](319,10)(325,50)(371,88)
\qbezier[50](267,88)(273,126)(319,166)
\qbezier[50](371,88)(360,126)(319,166)
\qbezier[50](267,88)(315,76)(371,88)
\qbezier[100](215,10)(315,-15)(423,10)
\qbezier[100](423,10)(395,95)(319,166)
\end{picture}
\caption{Construction of the graph $\Lambda_3$ based on $M_1$ } \label{construction}
\end{figure}

\begin{figure}[ht]
\centering
\begin{picture}(400, 200)
\unitlength 0,3mm
    \put(0,90){\line(1,0){50}}
    \put(0,90){\line(2,1){67}}
    \put(50,90){\line(1,2){17}}
    \put(67,124){\line(0,1){30}}

    \put(-7,83){$a$}
    \put(53,83){$b$}
    \put(70,122){$c$}
    \put(70,152){$d$}

\put(90,130){\vector(1,0){30}}

    \put(80,60){\line(1,0){50}}
    \put(80,60){\line(2,1){67}}
    \put(130,60){\line(1,2){17}}
    \put(147,94){\line(0,1){20}}

    \put(130,60){\line(1,0){50}}
    \put(130,60){\line(2,1){67}}
    \put(180,60){\line(1,2){17}}
    \put(197,94){\line(0,1){20}}

    \put(147,94){\line(1,0){50}}
    \qbezier[30](147,94)(180,110)(214,128)
    \put(197,94){\line(1,2){17}}
    \put(214,128){\line(0,1){20}}

    \put(147,114){\line(1,0){50}}
    \put(147,114){\line(2,1){67}}
    \put(197,114){\line(1,2){17}}
    \put(214,144){\line(0,1){20}}

    \put(73,53){$a$}
    \put(183,53){$b$}
    \put(217,125){$c$}
    \put(217,162){$d$}

\put(240,130){\vector(1,0){30}}

\put(174,3){$a$}

    \put(180,10){\line(1,0){50}}
    \put(180,10){\line(2,1){67}}
    \put(230,10){\line(1,2){17}}
    \put(247,44){\line(0,1){20}}

    \put(230,10){\line(1,0){50}}
    \put(230,10){\line(2,1){67}}
    \put(280,10){\line(1,2){17}}
    \put(297,44){\line(0,1){20}}

    \put(247,44){\line(1,0){50}}
    \qbezier[30](247,44)(270,55)(314,78)
    \put(297,44){\line(1,2){17}}
    \put(314,78){\line(0,1){20}}

    \put(247,64){\line(1,0){50}}
    \put(247,64){\line(2,1){67}}
    \put(297,64){\line(1,2){17}}
    \put(314,94){\line(0,1){20}}


\put(274,3){$\alpha$}
\put(384,3){$b$}
    \put(280,10){\line(1,0){50}}
    \put(280,10){\line(2,1){67}}
    \put(330,10){\line(1,2){17}}
    \put(347,44){\line(0,1){20}}

    \put(330,10){\line(1,0){50}}
    \put(330,10){\line(2,1){67}}
    \put(380,10){\line(1,2){17}}
    \put(397,44){\line(0,1){20}}

    \put(347,44){\line(1,0){50}}
    \qbezier[30](347,44)(370,55)(414,78)
    \put(397,44){\line(1,2){17}}
    \put(414,78){\line(0,1){20}}

    \put(347,64){\line(1,0){50}}
    \put(347,64){\line(2,1){67}}
    \put(397,64){\line(1,2){17}}
    \put(414,94){\line(0,1){20}}

\put(316,68){$\beta$}
\put(416,71){$\gamma$}
\put(452, 146){$c$}
    \put(314,78){\line(1,0){50}}
    \qbezier[30](314,78)(345,96)(381,114)
    \qbezier[20](364,78)(372,96)(381,114)
    \put(381,112){\line(0,1){2}}
    \qbezier[10](381,114)(381,123)(381,132)

    \put(364,78){\line(1,0){10}}
    \qbezier[15](374,78)(394,78)(414,78)
    \qbezier[30](364,78)(395,96)(431,114)
    \put(414,78){\line(1,2){17}}
    \put(431,112){\line(0,1){20}}

    \qbezier[20](381,112)(404,112)(431,112)
    \qbezier[30](381,112)(420,130)(448,146)
    \put(431,112){\line(1,2){17}}
    \put(448,146){\line(0,1){20}}

    \put(423,132){\line(1,0){8}}
    \qbezier[20](381,132)(402,132)(423,132)
    \qbezier[35](381,132)(414,148)(446,165)
    \put(431,132){\line(1,2){17}}
    \put(448,162){\line(0,1){20}}

    \put(308,110){$\delta$}
    \put(418,111){$\epsilon$}
    \put(452,186){$\zeta$}
    \put(452, 226){$d$}
    \put(314,114){\line(1,0){50}}
    \put(314,114){\line(2,1){67}}
    \put(364,114){\line(1,2){17}}
    \put(381,148){\line(0,1){20}}

    \put(364,114){\line(1,0){50}}
    \put(364,114){\line(2,1){67}}
    \put(414,114){\line(1,2){17}}
    \put(431,148){\line(0,1){20}}

    \put(381,148){\line(1,0){50}}
    \qbezier[30](381,148)(420,166)(448,182)
    \put(431,148){\line(1,2){17}}
    \put(448,182){\line(0,1){20}}

    \put(381,168){\line(1,0){50}}
    \put(381,168){\line(2,1){67}}
    \put(431,168){\line(1,2){17}}
    \put(448,198){\line(0,1){20}}

\end{picture}
\caption{Construction of the bare graph $\Lambda_3$ based on  $M_2$} \label{motif2}
\end{figure}

\begin{figure}[htbp] 
\centering
\begin{picture}(300, 120)
\qbezier[150](50,70)(75,80)(100,70)
\qbezier[150](40,30)(65,40)(85,35)
\qbezier[150](50,70)(25,50)(40,30)
\qbezier[150](100,70)(85,50)(85,35)
    \put(43,70){$d$}
    \put(103,70){$c$}
    \put(33,27){$a$}
    \put(88,30){$b$}
\put(110,50){\vector(1,0){30}}
\qbezier[150](165,90)(190,100)(215,90)
\qbezier[150](155,50)(180,60)(200,55)
\qbezier[150](165,90)(150,70)(155,50)
\qbezier[150](215,90)(200,70)(200,55)
\qbezier[150](200,55)(230,60)(255,50)
\qbezier[150](195,10)(220,20)(245,10)
\qbezier[150](200,55)(190,30)(195,10)
\qbezier[150](255,50)(240,30)(245,10)
\qbezier[150](215,90)(240,80)(265,90)
\qbezier[40](205,45)(230,40)(255,50)
\qbezier[100](215,90)(217,70)(212,56)
\qbezier[10](212,56)(210,50)(205,45)
\qbezier[150](265,90)(270,70)(255,50)
\qbezier[100](155,50)(180,40)(196,43)
\qbezier[8](196,43)(200,43)(205,45)
\qbezier[150](145,10)(170,0)(195,10)
\qbezier[150](155,50)(160,30)(145,10)
\qbezier[30](205,45)(210,30)(195,10)
    \put(158,90){$d$}
    \put(268,90){$c$}
    \put(138,7){$a$}
    \put(248,7){$b$}
\end{picture}
\caption{Construction of the bare graph $\Lambda_2$ based on $M_3$ }
\end{figure}
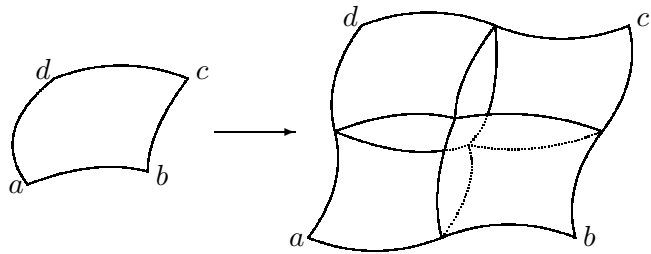

\subsection{Degree distribution}
\label{SS24}

Now we turn to the structural properties of the graphs constructed
above. Here and subsequently, $q$ and $r$ stand for the number of
nodes and bonds in the corresponding motif, respectively. By the
construction described above, the number of nodes in $\Lambda_k$ is
$|V_k|=q|V_{k-1}|-q(q-1)/2$ and $|V_1|=q$. Likewise,
 the expected number of bonds is $|E_1|=r$ and
$\langle|E_k|\rangle=q\langle|E_{k-1}|\rangle+rp$. Hence
\begin{equation}
 \label{VkEk}
|V_k| = \frac{q^{k} +q}{2}, \qquad \langle |E_k|\rangle = r q^{k-1}
+ r p \frac{q^{k-1}-1}{q-1}, \qquad k \in \mathbb{N}.
\end{equation}
As was mentioned above, the degree distribution is an important
characteristic of the graph. In contrast to the Erd\H{o}s-R\'enyi
type graphs, the distribution of the random variable $n(i)$ in our
graphs depends on the type of $i$. Thus, the simplest way to
describe this distribution is to average $n(i)$ over the nodes of a
given $\Lambda_k$, that is, to consider $
  n_k = |V_k|^{-1} \sum_{i \in V_k} n(i)$.
 Let $\langle n_k \rangle$ be the expected value of $n_k$. Then
\begin{equation}
 \label{nk}
\langle n_k \rangle = 2 \frac{\langle|E_k|\rangle}{ |V_k|} = \frac{4 r}{q(q-1)}\Big( q-1+p -  \frac{q-1 + 2p}{ q^{k-1}+1}\Big).
\end{equation}
However, this result gives only partial information about the node
degree distribution. To get more let us analyze the structure of the
node sets $V_k$, $k=1 , 2 , \dots  $. For a given $\Lambda_k$ and $l
= 1 , \dots , k$, let $V^{(l)}_k$ be the set of nodes $i\in V_k$
which have the same degree distribution, independent of $k$ for
$l\leq k-1$. For the graphs based on regular motifs $M_1$, $M_3$ and
$M_5$, $V^{(l)}_k$ consists of the nodes which are external for some
$\Lambda_l$ and, at the same time, are internal for any
$\Lambda_{l+1}$. Here we mean those $\Lambda_l$'s which are
subgraphs for $\Lambda_k$. As an example, let us consider the graph
$\Lambda_2$ based on $M_1$, see the middle graph in Fig.
\ref{construction}. The nodes $a$, $b$, and $c$ constitute
$V^{(2)}_2$, whereas the remaining nodes constitute $V^{(1)}_2$. For
$M_2$ and $M_4$, this partition is more complicated and will be
discussed below. First we analyze  $\Lambda_k$ based on complete
motifs $M_1$ and $M_5$. The elements of $V^{(k-1)}_k$ are exactly
the nodes at which the subgraphs $\Lambda_{k-1}^j$, $j=1, \dots , q$
are glued up to form $\Lambda_k$, whereas the elements of
$V^{(k)}_k$ are exactly the external nodes of $\Lambda_k$. Then
$|V^{(k)}_k|=q$ and $|V^{(k-1)}_k| = q(q-1)/2$. For $l< k-1$, we
have $|V^{(l)}_{k}| = q |V^{(l)}_{k-1}|$, which can be solved to
yield
\begin{equation}
  \label{Vklq}
 |V^{(l)}_k | = \frac{1}{2} q^{k-l}(q-1), \quad l = 1 , \dots , k-1, \qquad |V^{(k)}_k| = q.
\end{equation}
The degrees of $i\in V^{(1)}_k$ are non-random as these nodes
receive no decorating bonds. For such $i$, $n(i) =
\sum_{j}n^{(0)}(j)$, where $n^{(0)}(j)$ is the degree of the
corresponding node in the basic pattern, and the sum is taken over
all such patterns which are glued up. By the symmetry of $M_1$ and
$M_5$, we have that $n(i) = 4$ for $M_1$ and $n(i) = 6$ for $M_5$.
For $i \in V^{(l)}_k$, $l= 2, 3, \dots , k-1$, we have $n(i)=
\tilde{n}(i) + \nu(i)$, where $\tilde{n}(i)$ is non-random and has
to be calculated as just described. The summand $\nu(i)$ is the
number of decorating bonds attached to $i$. For $l = 1 , \dots ,
k-1$ and $i\in V_{k}^{(l)}$, we have $\tilde{n}(i) = 2(q-1)$ and
$\nu (i)$ takes values $ \nu = 0, 1 , 2 , \dots , 2(q-1) (l-1)$,
with probability
\begin{equation}
  \label{Probq}
{\rm Prob} \left(\nu (i \right) = \nu)  = \genfrac{(}{)}{0pt}{}{2(q-1)(l-1)}{ \nu}
 p^\nu (1-p)^{2(q-1)(l-1)-\nu}.
\end{equation}
For $i \in V^{(k)}_k$, $\nu (i)$ takes values $0, 1 , \dots ,
(q-1)(k-1)$. As is usual in the theory of real world networks, which
are in fact non-random, the randomness manifests itself as the
random choice of a node. If we apply this principle here, then
(\ref{Probq}) can be considered as the conditional probability
distribution, conditioned at the event that the node $i$ is been
picked from the set $V^{(l)}_k$. The probability  of this event is
taken to be proportional to the number of elements, that is,
\begin{eqnarray}
  \label{ProbiinVkl}
  {\rm Prob} \left( i \in V^{(l)}_k \right) & = & \frac{|V^{(l)}_k|}{|V_k|} = \frac{q-1}{1 + q^{1-k}} q^{-l}, \quad l \leq k-1, \\[.2cm]
{\rm Prob} \left( i \in V^{(k)}_k \right) & = & \frac{2}{q^{k-1} + 1}. \nonumber
\end{eqnarray}
Now we take the expectation of $n(i)$ with respect to this
distribution and obtain\footnote{Detailed calculations of this and
similar quantities are given in Appendix below.}
\begin{equation}
  \label{nK1}
 \langle n_k \rangle = \frac{q^{k-1}(2q-2+2p)-2p}{q^{k-1}+1},
\end{equation}
which agrees with (\ref{nk}). In the same way we find the second
moment
\[ \langle n^2 \rangle = 4(q-1)^2 + (8q-6)p + (4q+2)p^2.\]
In order to figure out the limit $k\rightarrow +\infty$ of the
distribution given by (\ref{Probq}) and (\ref{ProbiinVkl}) we
calculate its characteristic function, cf. (\ref{nk1}),
\begin{eqnarray*}
\varphi_k (t) & = &  \sum_{l=1}^{k-1} \sum_{\nu =0}^{2(q-1)(l-1)} e^{( 2(q-1) + \nu)\text{i}t} \frac{(q-1)\cdot q^{-l}}{ 1 + q^{1-k}}\cdot\\[.17cm] \nonumber
&&\cdot {
\genfrac{(}{)}{0pt}{}{2(q-1)(l-1)}{\nu}} p^\nu (1-p)^{2(q-1)(l-1)-\nu} + \\[.2cm]
& + & \sum_{\nu=0}^{(q-1) (k-1)} e^{(q-1+ \nu)\text{i}t} \frac{2}{q^{k-1} + 1}  \\[.17cm] \nonumber
&\times &{\genfrac{(}{)}{0pt}{}{(q-1)(k-1)}{\nu}} p^\nu
(1-p)^{(q-1)(k-1)-\nu}.\nonumber
\end{eqnarray*}
Then the limiting characteristic function is
\begin{equation}
  \label{fCharMq}
\varphi (t) = \frac{(q-1)  e^{ 2(q-1) \text{i} t}}{  q -\left(e^{\text{i}t} p + 1 - p \right)^{2(q-1)} }.
\end{equation}
As was mentioned above for the graph based on $M_3$, the same node
partition can be applied also for the graphs based on $M_1$ and
$M_5$. Motif $M_3$ is regular hence (\ref{Vklq}) and
(\ref{ProbiinVkl}) still holds. Here for $l=1,2,\ldots, k-1$ and $i
\in V_k^{(l)}$ we have $n(i)=4+\nu(i)$ and $\nu(i)$ takes values
$\nu=0,1,2,\ldots,4(l-1)$ with probability
\[
\text {Prob} \left(\nu (i \right) = \nu)  = \genfrac{(}{)}{0pt}{}{4(l-1)}{\nu} p^\nu (1-p)^{4(l-1)-\nu}.
\]
The external node $i \in V_k^{(k)}$ has random degree
$\nu(i)=0,1,\ldots,2(k-1)$. Taking the expectation of $n(i)$ with
respect to this distribution one obtains, see Appendix,
\begin{equation}
  \label{nKK}
\langle n_k \rangle =
4+\frac{4}{3}\left(p-\frac{3+2p}{4^{k-1}+1}\right),
\end{equation}
and the $k\to +\infty$ limit of the  second moment
\begin{equation*}
\langle n^2 \rangle = 16+12p+\frac{68}{9}p^2.
\end{equation*}
The characteristic function is
\begin{eqnarray*}
  \varphi_k (t)  &  = & \frac{3 e^{ 4 \text{i} t}}{1 + 4^{1-k}} \cdot \frac{1 - 4^{1-k} \left(e^{\text{i}t} p + 1 - p \right)^{4(k-1)} }{ 4 -\left(e^{\text{i}t} p + 1 - p \right)^4 } + \\ [.2cm]
& + & \frac{2 e^{2 \text{i}t}}{4^{k-1} +1} \left(e^{\text{ i}t} p +
1 - p \right)^{2(k-1)} , \quad {\rm i} = \sqrt{-1}, \nonumber
\end{eqnarray*}
which in the limit $k \rightarrow \infty$ takes the form
\begin{equation} \label{fCharM3}
\varphi (t) = \frac{3  e^{ 4 \text{i} t}}{  4 -\left(e^{\text{i}t} p + 1 - p \right)^4 }.
\end{equation}
It remains to analyze the graphs based on $M_2$ and $M_4$. Label the
nodes of $M_2$ as shown in the Fig. \ref{motifsGraphs}. For the
external nodes of the corresponding graph, we have
\begin{eqnarray} \nonumber
n(a)=n(b) = 2 + \nu(a), && \nu=0,1,\ldots,2(k-1),\\ \nonumber
n(c) = 3 + \nu(v), && \nu=0,1,\ldots,3(l-1),\\ \nonumber
n(d) = 1 + \nu(d), && \nu=0,1,\ldots,k-1.
\end{eqnarray}
For each $l=1,2,\ldots,k-1$ set the $V_k^{(l)}$ consists of three
subsets of the same cardinality with the following degrees
\begin{eqnarray} \nonumber
n(i) = 3 + \nu(i),&&  \nu=0,1,\ldots,3(l-1),\\ \nonumber
n(i) = 4 + \nu(i),&&  \nu=0,1,\ldots,4(l-1),\\ \nonumber
n(i) = 5 + \nu(i),&&  \nu=0,1,\ldots,5(l-1).
\end{eqnarray}
This yields, see Appendix,
\begin{equation}
  \label{nKK1}
 \langle n_k \rangle =
 4+\frac{4}{3}\left(p-\frac{3+2p}{4^{k-1}+1}\right),
\end{equation}
and the second moment
\begin{equation*}
\langle n^2 \rangle = \frac{50}{3}+\frac{112}{9}p+\frac{214}{27}p^2.
\end{equation*}
For the characteristic function, we have
\begin{eqnarray*}
  \varphi_k (t)  & = & \frac{1}{1+4^{1-k}} \Big[
e^{3\text{i} t}\frac{1-4^{1-k}(pe^{\text{i} t}+1-p)^{3(k-1)}}{4-(pe^{\text{i} t}+1-p)^3} + \\ [0.2cm] \nonumber
&  + &\quad e^{4\text{i} t}\frac{1-4^{1-k}(pe^{\text{i} t}+1-p)^{4(k-1)}}{4-(pe^{\text{i} t}+1-p)^4} + \\ [0.2cm]
&  + & \quad e^{5\text{i} t}\frac{1-4^{1-k}(pe^{\text{i} t}+1-p)^{5(k-1)}}{4-(pe^{\text{i} t}+1-p)^5}\Big] +  \nonumber \\[0.2cm]
\nonumber
&+ & \frac{2}{4^{k}+4}\Big[ 2e^{2\text{i} t}(pe^{\text{i} t}+1-p)^{2(k-1)} + \\[0.2cm] \nonumber
& +& \quad e^{3\text{i} t}(pe^{\text{i} t}+1-p)^{3(k-1)} +
e^{\text{i} t}(pe^{\text{i} t}+1-p)^{k-1}\Big], \nonumber
\end{eqnarray*}
which in the limit $k \rightarrow \infty$ yields
\begin{eqnarray}\label{fCharM2}
\varphi (t)&\xrightarrow{} & \frac{  e^{ 3 \text{i} t}}{  4 -\left(e^{\text{i}t} p + 1 - p \right)^3 } + \frac{e^{ 4 \text{i} t}}{  4 -\left(e^{\text{i}t} p + 1 - p \right)^4 } +\\[.2cm] \nonumber
 & &+ \ \frac{  e^{ 5 \text{i} t}}{  4 -\left(e^{\text{i}t} p + 1 - p \right)^5 }.
\end{eqnarray}
For the graphs based on $M_4$, we obtain
\begin{equation}\nonumber
\langle n_k \rangle = 5+\frac{5}{3}\Big(p-\frac{3+2p}{4^{k-1}+1}\Big)
\end{equation}
\begin{equation}\nonumber
\langle n^2 \rangle = \frac{76}{3}+\frac{167}{9}p+\frac{335}{27}p^2,
\end{equation}
\begin{eqnarray}\label{fCharM4}
\varphi(t) & = & \frac{  e^{ 4 \text{i} t}}{2\big(  4 -\left(e^{\text{i}t} p + 1 - p \right)^4 \big)} + \frac{2e^{ 5 \text{i} t}}{  4 -\left(e^{\text{i}t} p + 1 - p \right)^5 } +\\[0.2cm]
 \nonumber
 & + &  \frac{  e^{ 6 \text{i} t}}{ 2\big( 4 -\left(e^{\text{i}t} p + 1 - p \right)^6\big) }.
\end{eqnarray}
For all our graphs, the limiting characteristics functions  can be
continued to functions analytic in some complex neighborhood of the
point $t=0$. This means that the limiting node degree distribution
has all moments and hence cannot be of scale-free type\footnote{For
scale-free graphs, the node degree distribution is  $P(k) = C
k^{-\gamma}$, $k \geq 1$, $\gamma>1$; hence,  $\sum_{k=1}^\infty k^m
P(k)$  diverges for all $m\geq \gamma -1$. }.  Another observation
here is that the characteristic function of the Poisson distribution
\[\varphi_{\rm Poisson} (t) = \exp\left[c \left( e^{{\rm i}t} - 1 \right) \right],\]
can be continued to a function analytic on the whole complex plane.
Therefore, the  degree distributions in our graphs with $p>0$ are
intermediate as compared to the Poisson and scale-free
distributions. For $p=0$, our functions (\ref{fCharMq}),
(\ref{fCharM3}), (\ref{fCharM2}), (\ref{fCharM4}) are also entire.

\subsection{Amenability}

The property of our graphs which we address now is {\it
amenability}. Let $G = (V, E)$ be a countable graph with node set
$V$ and bond set $E$. For a finite $\mathit{\Delta} \subset V$, by
$\partial \mathit{\Delta}$ we denote the set of  nodes which are not
in $\mathit{\Delta}$ but have neighbors in $\mathit{\Delta}$. By
$|\mathit{\Delta}|$ and $|\partial \mathit{\Delta}|$ we denote the
number of elements in these sets. The graph $G$ is said to be {\it
amenable} if there exists a sequence of finite node sets
$\{\mathit{\Delta}_k\}_{k\in \mathbb{N}}$, such that
\begin{equation}
 \label{amenab}
\lim_{k\rightarrow +\infty}  \frac{|\partial \mathit{\Delta}_k|}{|\mathit{\Delta}_k|} = 0.
\end{equation}
If such a limit is positive for any sequence
$\{\mathit{\Delta}_k\}_{k\in \mathbb{N}}$, the graph is called {\it
nonamenabile}. Sometimes, sequences for which (\ref{amenab}) holds
are called {\it Van Howe} sequences. Cayley trees, except for
$\mathbb{Z}$, are nonamenable. Let us  consider the underlying
graphs of our random graphs. Due to their hierarchical structure, it
is convenient to check (\ref{amenab}) for the sequence of node sets
of $\Lambda_k$, that is for $\{V_k\}_{k\in \mathbb{N}}$. By the
construction of $\Lambda_k$, the inner boundary of each $V_k$ is the
set of all its external nodes, the number of which is equal to the
number of nodes in the corresponding motif, i.e. it is $q$. By
construction,  $q-1$ of them become inner nodes of $\Lambda_{k+1}$,
and receive new $k(q-1)$ neighbors (and none in the next steps). The
remaining one becomes an external node of $\Lambda_{k+1}$. We can
choose $\{V_k\}_{k\in \mathbb{N}}$ in the way that this external
node becomes an internal one in the next step. And then it receives
new $(q-1)(k+2)$ neighbors outside $\Lambda_k$ and none in the next
steps. Then for all the graphs we obtain
\begin{equation}\nonumber
\frac{|\partial \Lambda_k|}{|\Lambda_k|} = \frac{k(q-1)^2+(q-1)(k+2)}{\frac{1}{2}(q^k+q)} \xrightarrow{k \rightarrow \infty} 0,
\end{equation}
which means that all our random graphs are amenable with probability
one.

\subsection{Clustering}

For a given node $i\in V$ of degree $n(i)$, let $N(i)$ be the number
of bonds linking its neighbors with each other, which is the number
of triangles with vertex $i$. Clearly, $N(i) \leq n(i) [n(i)-1]/2$
and the maximum value of this parameter is attained for complete
graphs where each node is a neighbor to all other ones. Thus, the
quantity
\begin{equation*}
Q(i) : = \frac{2 N(i)}{n (i) [ n(i) -1]}
\end{equation*}
characterizes clustering at node $i$. Then the clustering of our
graphs we define as
\begin{equation*}
Q = \lim_{k\rightarrow +\infty} \frac{1}{|V_k|} \sum_{i \in V_k} Q(i).
\end{equation*}
Note that for many graphs, e.g., for trees or bipartite graphs, one
has $Q(i) =0$ for any node $i$, see also
\cite{ErdosKleitmanRothschild, FutornyUstimenko}. For random graphs,
the degree $n(i)$, as well as the parameter $N(i)$, are random. The
calculation of $Q$ in this case is much more involved. We will
address it in a forthcoming paper.  Here we only compare the values
of $Q$ obtained for the bare and fully decorated versions of our
graphs, i.e., for $p=0$ and $p=1$.

For the bare graph $\Lambda_k$ based on $M_1$, we have $n(i)=4$ for
internal node $i \in V_k$ and $n(i)=2$ for external node $i \in
V_k$. Besides
\begin{equation}
\nonumber
N(i)=\left \{
    \begin{array}{ll}
       3 & \qquad i \in V_k^{(1)},\\
       2 & \qquad i \in V_k^{(l)},~l=2,3,\ldots, k-1,\\
         1 & \qquad i \in V_k^{(k)},
    \end{array}
\right.
\end{equation}
which follows directly from the construction of the graphs. By
(\ref{VkEk}) and (\ref{Vklq}) one gets
\begin{eqnarray}\nonumber
\frac{1}{|V_k|}\sum_{i \in V_k} Q(v)
 & =&\frac{1}{|V_k|}\sum_{l=1}^k\sum_{i \in V_k^{(l)}} Q(i) =\\[0.2cm]\nonumber
 & =&\frac{|V_k^{(1)}|}{2|V_k|}
+\frac{|V_k|-|V_k^{(1)}|-3}{3|V_k|}+\frac{3}{|V_k|}
=\frac{1}{3}+\frac{|V_k^{(1)}|}{6|V_k|}+\frac{2}{|V_k|}.
\end{eqnarray}
Hence,
\begin{equation}\nonumber
Q = \lim_{k\rightarrow +\infty} \frac{1}{|V_k|} \sum_{i \in V_k} Q(i)=\frac{4}{9}=0,4444\ldots.
\end{equation}
For the fully decorated graph based on $M_1$, internal node $i \in
V_k^{(l)}$, $l=1,2,\ldots,k-1$, has degree $n(i)=4l$ and $N(i)=4l$
whereas for external node $i \in V_k^{(k)}$, we have $n(i)=2k$ and
$N(i)=2k-1$. Then
\begin{eqnarray}\nonumber
\frac{1}{|V_k|}\sum_{i \in V_k} Q(i)
 & =&\frac{1}{|V_k|}\sum_{l=1}^k\sum_{i \in V_k^{(l)}} Q(v)=\\[0.2cm] \nonumber
& =&\frac{1}{|V_k|}\left(\sum_{l=1}^{k-1}\sum_{i \in V_k^{(l)}}\frac{2\cdot 4l}{4l(4l-1)}+|V_k^{(k)}|\frac{2(2k-1)}{2k(2k-1)}\right)=\\[0.2cm]
\nonumber
 & =&\frac{1}{|V_k|}\left( 2\cdot 3^k\sum_{l=1}^{k-1} \frac{3^{-l}}{4l-1}+\frac{3}{k}\right)=\frac{4\cdot 3^k}{3^k+3}\sum_{l=1}^{k-1} \frac{3^{-l}}{4l-1}+\frac{6}{k(3^k+3)}.
\end{eqnarray}
Hence
\begin{eqnarray}\nonumber
 Q = 2\cdot 3^{-1/4} \arctan 3^{-1/4} - 3^{-1/4} \ln \frac{3^{1/4} +1}{3^{1/4} -1} \approx 0,525897.
\end{eqnarray}
For the bare graph based on $M_5$, one obtains for internal node $i
\in V_k^{(l)}, l=1,2,\dots, k-1$: $n(i)=6$,  $N(i)=8$ for $l=1$, and
$n(i)=6$, $N(i)=6$ for $l \ge 2$. For the fully decorated graph
based on $M_5$ we have for all internal nodes $n(i)=6l$, and
$N(i)=8$ for $l=1$, and $N(i)=12l-3$ for $l \ge 2$. Hence, for the
bare graph, we get
\begin{equation}\label{Q}
 Q = \lim_{k\rightarrow +\infty} \left(\frac{2}{5} + \frac{2|V^{(1)}_k|}{15|V_k|} + \frac{12}{5|V_k|} \right) = 0.5,
\end{equation}
and for the fully decorated graph
\begin{equation}
Q \approx 0.554145,  \nonumber
\end{equation}
which  surprisingly is  quite close to the clustering in the bare
version (\ref{Q}).

\subsection{Small-world property}

There exists one more property of real-world networks which
Erd\H{o}s-R\'enyi type graphs do not share, see e.g.
\cite{Newman2003, BarratWeigt}. It is the so called {\it small-world
property}. To formulate it one needs the following notion. A path in
the graph is a sequence of nodes such that every two consecutive
elements are neighbors to each other. The length of the path is the
number of such consecutive pairs, which is equal to the number of
bonds one passes on the way from the origin to the terminus. If
every two nodes can be connected by a path, the graph is said to be
connected. For a given two nodes, $i$ and $j$, the length of the
shortest path  connecting them is said to be the distance
$\rho(i,j)$ between these nodes. Informally speaking, a graph $G =
(V,E)$ has the small-world property (is a small-world graph) if
every two nodes $i,j\in V$ are {`not too far'} from each other. More
precisely this property is formulated as follows. An infinite graph
$G$ has a small-world property if there exists a sequence of its
connected finite subgraphs $\{G_k\}_{k\in \mathbb{N}}$ with the
following property. Let ${\rm diam}(G_k) = \max_{i,j \in V_k} \rho
(i,j)$ be the {\it diameter} of $G_k$, $k\in \mathbb{N}$, and
$\langle n_k \rangle$ be the average value of the node degree in
$G_k$, that is, $\langle n_k \rangle = 2 |E_k| / |V_k|$. Then the
sequence $\{G_k\}_{k\in \mathbb{N}}$, and hence the graph $G$, are
said to have the small-world property if there exists a positive
constant $C$ such that for all $k\in \mathbb{N}$
\begin{equation*}
{\rm diam} (G_k) \leq C \log_{\langle n_k \rangle} |V_k|.
\end{equation*}
In such graphs, the distances between the nodes scale at most
logarithmically with the size of the graph. Let us consider this
characteristic of our graphs without decorations, i.e., for $p=0$.
For a chosen motif, the diameter or $\Lambda_k$ is the maximum
distance between two external nodes
 \[\text{diam}(\Lambda_k)=\max_{i,j} \rho(i,j),   \quad i, j \in V_k^{(k)}, i \neq j.\]
For the complete motifs $M_1$ and $M_5$ there is
$\text{diam}(\Lambda_1)=1$, and for the other motifs
$\text{diam}(\Lambda_1)=2$. By the construction of $\Lambda_k$,
$k=2,3,\ldots$, it is easily seen that the distances between two
external nodes increases two times ay each step. Hence
\begin{equation*}
\begin{array}{lcl}
\text{diam}(\Lambda_k) = 2^{k-1} & & \text{for }M_1\text{ and }M_5,\\
\text{diam}(\Lambda_k) = 2^{k} & & \text{for }M_2,~M_3\text{ and }M_4,
\end{array}
\end{equation*}
that means that the diameters scale exponentially with the size of
the graph.

For $p=1$, the distance between two chosen external nodes in
$\Lambda_k$, $k=1,2,\ldots$, is $1$. Hence, we have to analyze the
distances between other pairs of nodes. Here we present the results
for graphs based on motif $M_1$ only. The distance between an
internal and an external node in $\Lambda_2$ does not exceed $2$.
The distance between two such nodes in $\Lambda_3$ is not greater
than $3$, and $k$ in $\Lambda_k$. Therefore, to estimate the
distance between two internal nodes in $\Lambda_k$,  one has to find
the greatest $l\leq k-1$ such that these nodes belong to different
$\Lambda_l$. Then add the distances between these nodes and the
common external node. Hence
\begin{equation*}
\text{diam}(\Lambda_k) \leq 2(k-1).
\end{equation*}
Thus, neither of our bare graphs  has the small-world property. At
the same time, this property holds for all fully decorated graphs.

\section{Phase Transitions in the Ising Model}

There exists a deep connection between the properties of Gibbs
random fields of the Ising model and the structural properties of
the underlying graphs, see \cite{H}.  In the physical terminology,
each (pure) Gibbs random field corresponds to a state of thermal
equilibrium of the model, see \cite{Georgii} for more details.
Accordingly, the existence of multiple Gibbs random fields
corresponds to the existence of multiple equilibrium states and
hence to phase transitions. For noninteracting spins, the Gibbs
random field is unique. However, if the interaction is strong enough
and if it is effectively propagated by the underlying graph (due to
high 'connectivity'), the Gibbs fields can be multiple.

 The
Ising model on an infinite graph $G = (V,E)$ is defined by assigning
spin variables  $\sigma_i = \pm 1$, $i \in V$. Two spins, $\sigma_i$
and $\sigma_j$, interact whenever $i\sim j$. The space of spin
configurations is then $\Sigma := \{-1,1\}^V$. It is equipped with
the discrete topology and the corresponding Borel $\sigma$-field. A
Gibbs random field is defined as a probability measure on $\Sigma$
which satisfies a certain condition formulated by means the so
called Gibbs specification, see \cite{Georgii}. The specification in
turn is constructed by means of conditional model Hamiltonians,
defined as follows. For a finite $\varDelta \subset V$, $\xi \in
\Sigma$, and a fixed value of the inverse temperature $\beta >0$,
the conditional Hamiltonian in $\varDelta$ is given by the following
expression
\begin{eqnarray}
  \label{BQ}
- \beta \mathcal{H}_\varDelta (\sigma_\varDelta|\xi) & = & h \sum_{i
\in \varDelta} \sigma_i + \sum_{\{i,j\}\in E_\varDelta}J_{ij}
\sigma_i \sigma_j + \sum_{i \in  \varDelta} \sum_{j \in \varDelta^c
: i \sim j}J_{ij} \sigma_i \xi_j,
\end{eqnarray}
where $\sigma_\varDelta = \{\sigma_i: i \in \varDelta\}$, $h$ is an
external field, and $J_{ij}\in \mathbb{R}$ is the spin-spin
interaction intensity. Note that the latter parameters include
$\beta$. For hierarchical graphs constructed in an algorithmic way,
the infinite graph $(V,E)$ is obtained as limiting object, defined
by means of a system of embeddings which map each finite fragment
into the graph, see \cite{BleherZalys1988,BleherZalys1989}. In this
article, however, we do not follow this way and consider the
annealed case, see \cite{Bovier2006}, where one deals exclusively
with states on such finite fragments. Here we mean the randomness
related to the decorating bonds.

In view of the hierarchical structure of our graphs, we take
$\varDelta= \varDelta_k$ in (\ref{BQ}) to be the set of the inner
nodes of a given $\Lambda_k$ corresponding to motif $M_1$. Then $j$
in the last term in (\ref{BQ}) runs through the set of external
vertices of $\Lambda_k$. We also restrict our consideration to the
case of $h=0$. For $k=1$ we have no internal nodes and no randomness
either. Thus the corresponding Hamiltonian is
\begin{equation}
  \label{A0}\nonumber
-\beta  \mathcal{H}_1 = K (ab + ac + bc),
\end{equation}
here we use the shorthand like $a = \sigma_a$, and $K$ stands for
the interaction intensity corresponding to nonrandom bonds. Recall
that, for $k>1$, by $E_k'$ (resp. $E_k''$) we denote the set of
solid (resp. decorating) bonds of $\Lambda_k$. To take the latter
randomness into account we introduce independent random variables
$\omega \in \{0,1\}^{E_k''}$ such that ${\rm Prob} (\omega_{ij} = 1)
= p$. Then we set $J_{ij}= J^\omega_{ij} = K$ for $\langle
i,j\rangle \in E_k'$, and $J_{ij}=J^\omega_{ij}= L \omega_{ij}$ for
$\langle i,j\rangle \in E_k''$. In general, we assume that $K\neq L$
as the random and nonrandom bonds play different roles in our
constructions. The nonrandom bonds form a skeleton of the graphs,
whereas the random ones increase its connectivity. Moreover, by
setting $K=0$ we can pass to the model defined on a purely random
graph.

 Then, for $k\geq 2$,  (\ref{BQ}) takes the form
\begin{eqnarray}
  \label{A1}
& & -\beta  \mathcal{H}_k (\sigma_{\varDelta_k}|a,b,c) :=  -\beta
\mathcal{H}_{\varDelta_k} (\sigma_{\varDelta_k}|a,b,c)  =
L\left(\omega_{ab} a b + \omega_{ac} a c + \omega_{bc} b c \right)
\\[.2cm] & & + a \sum_{i \in N_k^a} J^\omega_{ai} \sigma_i +
b \sum_{i \in N_k^b} J^\omega_{bi} \sigma_i  +  c \sum_{i \in N_k^c}
J^\omega_{ci} \sigma_i + \sum_{\langle i,j\rangle\in E_k^{in}}
J^\omega_{ij} \sigma_i \sigma_j, \nonumber
\end{eqnarray}
where $N_k^v = \{ i \in V_k: i \sim v\}$ is the set of the neighbors
of $v$ in $\Lambda_k$ and $E_{k}^{in}\subset E_k$ is the set of
bonds connecting the inner nodes of $\Lambda_k$ to each other.

\begin{figure}[htbp]
\centering
\unitlength 0,4mm
\begin{picture}(140, 100)
\trojkatLevelDwa(25,15,52)
\put(17,10){$a$}
\put(132,10){$b$}
\put(70,0){$\gamma$}
\put(77,95){$c$}
\put(40,50){$\beta$}
\put(108,50){$\alpha$}
\put(45,30){$\Lambda_{k-1}^a$}
\put(94,30){$\Lambda_{k-1}^b$}
\put(70,65){$\Lambda_{k-1}^c$}
\qbezier[50](25,15)(33,55)(76,93)
\qbezier[50](25,15)(75,5)(129,15)
\qbezier[50](129,15)(117,55)(76,93)
\end{picture}
\caption{Graph $\Lambda_k$} \label{grafL2doHamiltonianu}
\end{figure}
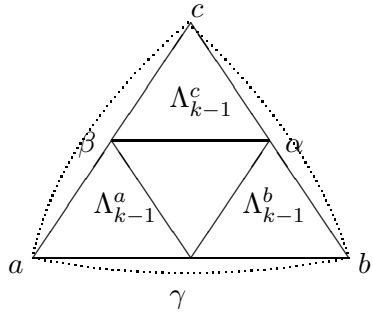
The Hamiltonian in (\ref{A1}) can be rewritten in a recursion way,
see Fig. \ref{grafL2doHamiltonianu}, as follows
\begin{equation}  \label{A2}
\left.
  \begin{array}{rcl}
-\beta \mathcal{H}_k (\sigma_{\varDelta_ k}| a,b,c) & = &
L(\omega_{ab} a b + \omega_{ac} a c + \omega_{bc} b c) -
\beta\mathcal{H}_{k-1} (\sigma_{\varDelta_{k-1}^a}| a,\gamma,\beta) \\[.2cm]& - &
\beta \mathcal{H}_{k-1}(\sigma_{\varDelta_{k-1}^b}| \gamma,b,\alpha)
- \beta \mathcal{H}_{k-1} (\sigma_{\varDelta_{k-1}^c}| \beta
,\alpha,c),
  \end{array}
\right.
\end{equation}
where $\varDelta_{k-1}^x$ stands for the set of inner nodes of
$\Lambda_{k-1}^x$, $x\in \{a,b,c\}$. Then the annealed Gibbs
distribution in $\Lambda_k$ is defined as follows
\begin{eqnarray*}
\pi^\omega_k(\sigma_{\varDelta_ k}|a,b,c) & = & \frac{1}{ Z_k(a,b,c)
} \exp\left( -\beta
 \mathcal{H}_k (\sigma_{\varDelta_ k}| a,b,c) \right), \quad k\geq 2,
\end{eqnarray*}
and $\pi^\omega_1=1$ as $\varDelta_ 1= \emptyset$. Here the
partition function has the form
\begin{eqnarray}\label{A3a}\nonumber
Z_k(a,b,c) & = & \bigg{\langle} \sum_{\sigma^k} \exp\left( - \beta
\mathcal{H}_k (\sigma_{\varDelta_ k}| a,b,c) \right)\bigg{\rangle},
\end{eqnarray}
\begin{equation}
  \label{A5}\nonumber
Z_1 (a,b,c) = \exp \left( K(ab+ac+bc) \right),
\end{equation}
and $\langle \cdot \rangle$ denotes the expectation in $\omega$.

For $k\geq 2$ let $f: \{-1,1\}^{\varDelta_k} \rightarrow \mathbb{R}$
be a local observable, which is a function dependent on
$\sigma_{\Lambda_m}$ with some  $m<k$ such that $\Lambda_m \subset
\Lambda_k$. Set
\begin{equation}  \label{A6}
F_k(f|a,b,c) = \sum_{\sigma_{\varDelta_k}} \bigg{\langle}
f(\sigma_{\varDelta_k}) \pi^\omega_k(\sigma_{\varDelta_k}|a,b,c)
\bigg{\rangle}, \quad k> m.
\end{equation}
The sequence $\{F_k(f|a,b,c)\}_{k\geq m}$ is bounded and thus has
accumulation points. Our aim is to study their dependence on the
values of the boundary spins $a,b,c$.

In view  of the independence of the bond variables $\omega$, we have
\begin{eqnarray}  \label{A8}
Z_k(a,b,c) & = & \bigg{\langle}\exp\left(  L\omega_{ab} a b \right)
\bigg{\rangle}\bigg{\langle}
\exp\left( L\omega_{ac} a c \right) \bigg{\rangle} \bigg{\langle} \exp\left( L\omega_{bc} b c \right)\bigg{\rangle} \\[.2cm]
& \times & \sum_{\alpha, \beta , \gamma}Z_{k-1} (a , \gamma, \beta) Z_{k-1} (\gamma, b, \alpha)  Z_{k-1} (\beta, \alpha, c). \nonumber
\end{eqnarray}
Assume now that the observable $f$ depends on the spins indexed by
$\Lambda_m\subset \varDelta_{k-1}^a$. Then  by (\ref{A2}) and
(\ref{A6}) we have
\begin{eqnarray*}
F_k(f|a,b,c) & = & \frac{1}{Z_k (a,b,c)}   \bigg{\langle}\exp\left( L\omega_{ab} a b \right)
\bigg{\rangle}\bigg{\langle} \exp\left( L\omega_{ac} a c \right) \bigg{\rangle} \bigg{\langle}
\exp\left( L\omega_{bc} b c \right)\bigg{\rangle} \times \\[.2cm] &&
\displaystyle \sum_{\alpha, \beta , \gamma} \bigg[ \bigg{\langle}
\sum_{\sigma_{\varDelta_{k-1}^a} } f(\sigma_{\varDelta_{k-1}^a})
\exp\left( -\beta\mathcal{H}_{k-1} (\sigma_{\varDelta_{k-1}^a}| a,\gamma,\beta) \right)\bigg{\rangle} \times \\[.2cm]
&  &\times  \bigg{\langle}\sum_{\sigma_{\varDelta_{k-1}^b}
}\exp\left(-\beta \mathcal{H}_{k-1}
 (\sigma_{\varDelta_{k-1}^b}| \gamma,b,\alpha) \right)\bigg{\rangle} \times\\[.2cm]
&  &\times \bigg{\langle}\sum_{\sigma_{\varDelta_{k-1}^c}
}\exp\left(-\beta \mathcal{H}_{k-1} (\sigma_{\varDelta_{k-1}^c}|
\beta,\alpha,c) \right)\bigg{\rangle}\bigg], \quad k>m,
\end{eqnarray*}
that can be rewritten as follows
\begin{eqnarray}  \label{A7}
& & F_k(f|a,b,c) = \\[.3cm] & & \quad = \frac{\sum_{\alpha, \beta , \gamma}
F_{k-1}(f|a , \gamma, \beta)Z_{k-1} (a , \gamma, \beta) Z_{k-1}(\gamma, b, \alpha)
 Z_{k-1} (\beta, \alpha, c)}{\sum_{\alpha, \beta , \gamma}Z_{k-1} (a , \gamma, \beta) Z_{k-1} (\gamma, b, \alpha)  Z_{k-1} (\beta, \alpha, c)}.
\nonumber
\end{eqnarray}
From now on we assume that the locality of  $f$ is such that it
$m=1$, see (\ref{A6}), and that the corresponding $\Lambda_1$ is a
subset of $\Lambda_{k-1}^a$ for all $k\geq 2$. Then, in addition to
(\ref{A6}), we set
\begin{equation}  \label{A8a}
F_1(f|a,b,c) = f(a,b,c)= f(a,c, b) >0,
\end{equation}
where we assume also that $f$ is positive and symmetric with respect
to $b \leftrightarrow c$. Next we introduce the following variables
\begin{eqnarray}
  \label{A9}
A_k & := & Z_k (1,1,1) = Z_k (-1,-1,-1), \\[.2cm] B_k & := & Z_k (1,\pm 1,\mp1) = Z_k (-1,\pm 1,\mp1) , \nonumber \\[.2cm]
Y^1_k & := & F_k(f|1,1,1) = F_k(f|-1,-1,-1), \nonumber    \\[.2cm]
Y^2_k & := & F_k(f|1,1,-1) = F_k(f|1,-1,1),     \nonumber \\[.2cm]
Y^3_k &:= & F_k(f|1,-1,-1)=F_k(f|-1,1,1),     \nonumber
\end{eqnarray}
and
\begin{eqnarray}
  \label{A10}\nonumber
\left. \begin{array}{rcl}
R^{+}  & = & \left( p e^L + 1-p\right)^3,\\[.1cm]
R^{-}  & = & \left( p e^L + 1-p\right)\left(  p e^{-L} + 1-p\right)^2 .
\end{array} \right.
\end{eqnarray}
Then from (\ref{A8}) and (\ref{A7}) we get
\begin{eqnarray}
  \label{A11}
\left. \begin{array}{rcl}
A_{k+1}& = & R^+ \left(A_k^3 + 3A_k B_k^2 + 4 B_k^3 \right), \\[.2cm]
B_{k+1}& = &    R^- \left(A_k^2 + 4 A_k B_k^2 + 3B_k^3 \right),
\end{array} \right.
\end{eqnarray}
and
\begin{eqnarray}  \label{A12}
Y^1_{k+1} & = & \frac{Y^1_k A_k(A_k^2 + B_k^2) + 2Y^2_k B_k^2(A_k + B_k) + 2 Y^3_kB_k^3 }{A_k^3 + 3A_k B_k^2 + 4 B_k^3} , \\[.2cm]\nonumber
Y^2_{k+1} & = & \frac{Y^1_k A_k B_k (A_k + B_k) + 2 Y^2_k B^2_k (A_k + B_k) + Y^3_k B_k^2(A_k+ B_k)  }{B_kA_k^2 + 4 A_k B_k^2 + 3B_k^3},  \\[.2cm]
Y^3_{k+1} & = & \frac{2 Y^1_k A_k B_k^2 + 2 Y^2_k B_k^2(A_k+B_k) +
Y^3_k B_k(A_k^2+B_k^2)}{B_kA_k^2 + 4 A_k B_k^2 + 3B_k^3}, \nonumber
\end{eqnarray}
with the initial conditions
\begin{equation}   \label{A13}
A_1 = e^{3K}, \quad B_1 = e^{-K},
\end{equation}
and
\begin{equation*}  
Y^1_1 = f(1,1,1), \quad Y^2_1 = f(1,1,-1), \quad Y^3_1 = f(1,-1,-1).
\end{equation*}
Then for
\begin{equation}  \label{A15}
x_k = A_k/B_k, \quad x_1 = e^{4K} >0,
\end{equation}
by (\ref{A11}) we get
\begin{equation*}
  x_{k+1} = t \phi (x_k),
\end{equation*}
where
\begin{equation}
  \label{A18}
\phi (x) =  \frac{x^3 + 3x + 4}{x^2 + 4x + 3}= \frac{x^2 - x + 4}{x+3},
\end{equation}
and
\begin{equation}
  \label{A16}
t = \frac{R^+}{R^-} = \left( \frac{pe^L + 1-p}{pe^{-L} + 1-p}\right)^2.
\end{equation}
In these notations, (\ref{A12}) can be rewritten in the following
form
\begin{equation}
  \label{a}
Y_{k+1} = T(x_k) Y_k, \quad k\in \mathbb{N},
\end{equation}
where $Y_k$ is the column vector transposed to $(Y^1_k, Y^2_k,
Y^3_k)$ and
\begin{equation}
  \label{a1}
  T(x) = \left( \begin{array}{ll} \frac{x(x^2+1)}{x^3 + 3x + 4} \ &\frac{2(x+1)}{x^3 + 3x + 4} \ \ \ \frac{2}{x^3 + 3x + 4} \\[.2cm]
\frac{x(x+1)}{x^2+4x+3} \ &\frac{2(x+1)}{x^2+4x+3} \ \ \
\frac{x+1}{x^2+4x+3} \\[,2cm]
\frac{2x}{x^2+4x+3} \ &\frac{2(x+1)}{x^2+4x+3} \ \ \
\frac{x^2+1}{x^2+4x+3}
   \end{array} \right).
\end{equation}
Observe that, for each $x>0$, $T(x)$ is a stochastic matrix, which
means that each of its rows consists of nonnegative elements and
sums up to one. Then, for each $k\in \mathbb{N}$, the matrix
\begin{equation}
  \label{a2}
  S_k = T(x_k) T(x_{k-1}) \cdots T(x_2) T(x_1)
\end{equation}
is also stochastic, and the solution of the recursion in (\ref{a})
is
\begin{equation}
  \label{a3}
  Y_{k+1} = S_k Y_1.
\end{equation}
Products of stochastic matrices as in (\ref{a2}) appear in the
theory of inhomogeneous Markov chains, see e.g.
\cite{S,Bremaud,Hajnal}. They also are being used in communication
networks, control theory, parallel computing, and decision making,
see  \cite{Touri,Blond,Gau,Touri1} and the references therein.

As mentioned above, our aim is to study the limits of the sequences
$\{Y^i_k\}$, $i=1,2,3$,  defined in (\ref{A9}) -- (\ref{A13}), and
hence described by (\ref{a}), (\ref{a3}). If, for an arbitrary
$Y_1$, the limit  $Y_{\infty}= \lim_{k\to \infty} Y_k$ is a vector
with all components equal to each other, then the limiting average
(\ref{A6}) is independent of the boundary spins, which corresponds
to the uniqueness of the limiting Gibbs state, and hence of the
state of thermal equilibrium of the model.  In the terminology of
Markov chains, this is related to the {\it ergodicity} of the
sequence $\{T(x_k)\}$. By definition, see e.g. \cite[Definition 1,
page 1479]{Touri1}, such a sequence is ergodic if the product
sequence $\{S_k\}$ as in (\ref{a2}) converges to a stochastic matrix
with identical rows. In \cite{Blond}, such a sequence is called
consensus. In this case, the sequence $\{Y_k\}$ converges to a
vector with identical entries. Likewise, the existence of
subsequences $\{Y_{k_l}\}_{l\in \mathbb{N}}$ convergent to vectors
with nonequal components corresponds to the multiplicity of such
states, and hence of a phase transition. Since we do not introduce
the Gibbs states of our model explicitly, we use the following
\begin{definition}
  \label{1df}
For fixed $K$, $L$ i $p$, the Ising model on our graph is said to be
in an unordered state if for each observable $f$ satisfying
(\ref{A8a}), there exists  $\lim_{k\rightarrow + \infty}
F_k(f|a,b,c)$ independent of $b$ and $c$, and hence of $a$.
Otherwise, the model is said to be in an ordered state.
\end{definition}
As might be seen from (\ref{a}), the limiting properties of the
sequence $\{x_k\}_{k\in \mathbb{N}}$ are crucial for the
corresponding properties of $\{Y_k\}_{k\in \mathbb{N}}$.
\begin{lemma}
  \label{1lm}
 Assume that the
sequence $\{x_k\}_{k\in \mathbb{N}}$ defined  in (\ref{A15}) --
(\ref{A16}) converges to a certain $x_* >0$. Then the sequence
$\{T(x_k)\}_{k\in \mathbb{N}}$ defined in (\ref{a1}) is ergodic.
\end{lemma}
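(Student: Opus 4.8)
The plan is to prove that $\{T(x_k)\}_{k\in\mathbb{N}}$ is \emph{weakly ergodic} with a uniform geometric rate, and then to upgrade weak ergodicity to the stronger convergence required here, exploiting the fact that in (\ref{a2}) the new matrices are multiplied on the \emph{left}, so that the rows of $S_{k+1}$ are convex combinations of the rows of $S_k$.

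First I would record that, for every $x>0$, all nine entries of the matrix $T(x)$ in (\ref{a1}) are \emph{strictly} positive: the denominators $x^{3}+3x+4$ and $x^{2}+4x+3=(x+1)(x+3)$ are positive on $(0,\infty)$, and so is each of the numerators $x(x^{2}+1),\ 2(x+1),\ 2,\ x(x+1),\ x+1,\ 2x,\ x^{2}+1$. Since $x\mapsto T(x)$ is continuous on $(0,\infty)$ and $x_k\to x_*>0$, the quantity $\delta_k:=\min_{i,j}[T(x_k)]_{ij}$ tends to $\min_{i,j}[T(x_*)]_{ij}>0$, so there are $k_0\in\mathbb{N}$ and $\delta_0>0$ with $\delta_k\ge\delta_0$ for all $k\ge k_0$. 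Let $\tau(P)=1-\min_{i_1,i_2}\sum_{j}\min(P_{i_1 j},P_{i_2 j})$ be the Dobrushin ergodic coefficient of a stochastic matrix $P$; since $T(x_k)$ has three columns, each entry $\ge\delta_0$ for $k\ge k_0$, we get $\tau(T(x_k))\le 1-3\delta_0=:\lambda<1$ for $k\ge k_0$. Combining this with the submultiplicativity $\tau(PQ)\le\tau(P)\tau(Q)$ (see e.g.\ \cite{S}) and with $\tau(S_{k_0})\le1$, the factorization $S_k=T(x_k)\cdots T(x_{k_0+1})\,S_{k_0}$ from (\ref{a2}) yields $\tau(S_k)\le\lambda^{k-k_0}$, hence $\tau(S_k)\to0$.

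The second step turns $\tau(S_k)\to0$ into convergence of $S_k$. Writing $\tau(S_k)=\tfrac12\max_{i_1,i_2}\sum_j|[S_k]_{i_1 j}-[S_k]_{i_2 j}|$, the rows of $S_k$, viewed as points of the probability simplex $\Delta\subset\mathbb{R}^3$, have pairwise $\ell^1$-distances at most $2\tau(S_k)$. Because $S_{k+1}=T(x_{k+1})S_k$ and every row of $T(x_{k+1})$ is a probability vector, each row of $S_{k+1}$ is a convex combination of the rows of $S_k$; hence the compact convex sets $C_k:=\operatorname{conv}\{\text{rows of }S_k\}$ satisfy $C_{k+1}\subseteq C_k$, and $\operatorname{diam}(C_k)$ (which for a polytope equals the maximal distance between its vertices) equals $2\tau(S_k)\to0$. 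A nested-sets argument then gives $\bigcap_k C_k=\{v\}$ for a single $v\in\Delta$, and $C_k\to\{v\}$; in particular every row of $S_k$ converges to $v$, i.e.\ $S_k\to\mathbf{1}v^{\mathsf T}$, a stochastic matrix with identical rows. This is precisely the ergodicity of $\{T(x_k)\}$ in the sense used in the paper, and it also makes $Y_k=S_{k-1}Y_1$ converge to $(v^{\mathsf T}Y_1)\mathbf{1}$.

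I expect the only genuinely delicate point to be the passage from weak ergodicity ($\tau(S_k)\to0$) to actual convergence of $S_k$: for general inhomogeneous products this implication can fail, so one must use the specific structure, namely that (\ref{a}) multiplies the new matrix on the left, which makes the row sets a shrinking nested family of simplices. The positivity check for $T(x)$ and the submultiplicativity of $\tau$ are routine, and I would only sketch them; an alternative to the nested-convex-hull argument would be a contraction estimate in the Hilbert projective metric, available since all the $T(x_k)$ are strictly positive, but the version above seems the shortest.
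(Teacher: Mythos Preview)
Your argument is correct and its first half coincides with the paper's own proof: both introduce the Dobrushin ergodicity coefficient, check strict positivity of the entries of $T(x)$, use continuity at $x_*$ to bound $\tau(T(x_k))$ (the paper writes $D$ instead of $\tau$) uniformly below $1$ for large $k$, and then invoke submultiplicativity to conclude $\tau(S_k)\to 0$.

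The difference is in the passage from $\tau(S_k)\to 0$ to actual convergence of $S_k$. The paper argues by compactness: the sequence of stochastic matrices $\{S_k\}$ has subsequential limits, each such limit $S$ satisfies $D(S)=0$ and hence has identical rows, and the Perron--Frobenius theorem is cited to conclude uniqueness of the limit. Your nested-convex-hull argument exploits instead the specific left-multiplication structure of (\ref{a2}): since each row of $S_{k+1}=T(x_{k+1})S_k$ is a convex combination of the rows of $S_k$, the convex hulls $C_k$ of the rows are nested with diameters $2\tau(S_k)\to 0$, so they shrink to a single point. This is more elementary and self-contained than the paper's appeal to Perron--Frobenius (whose role in an inhomogeneous product is not made entirely explicit there), and it makes transparent exactly where the order of multiplication in (\ref{a2}) is used---a point you yourself flag as the delicate one.
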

\begin{proof}
 By definition, each row $\tau_i = (\tau_{i1}, \dots, \tau_{in})$
 of a stochastic matrix $T = (\tau_{ij})_{n\times n}$ is a probability distribution. For two
 such rows, we define
 \begin{equation*}
\|\tau_i - \tau_j\| = \frac{1}{2} \sum_{l=1}^n |\tau_{il} -
\tau_{jl}|.
 \end{equation*}
Then the {\it Dobrushin ergodicity coefficient} of $T$ is
\begin{equation*}
  D(T) := \max_{i,j=1, \dots , n} \|\tau_i - \tau_j\|.
\end{equation*}
It can also be written in the form:
\begin{equation*}
  D(T) =1 - \min_{i<j} \sum_{l=1}^n \min\{\tau_{il}; \tau_{jl}\},
\end{equation*}
which yields that $D(T)< 1$ whenever all $\tau_{ij}$ are strictly
positive, see \cite{Gau} for more detail on this issue. On the other
hand, for two stochastic matrices $T$ and $Q$, it is known that, see
\cite[Chapter 3]{Bremaud},
\begin{equation}
  \label{a10}
  D(TQ) \leq D(T) (Q).
\end{equation}
Since the matrix elements of $T(x)$ in (\ref{a1}) are continuous in
$x>0$, one has $T(x_k) \to T(x_*)$, component-wise, as $k\to
+\infty$. Therefore, each element of the latter matrix is strictly
positive, which yields $D(T(x_*))=: \delta <1$. By the mentioned
continuity we also have that, for a given $\epsilon>0$ such that
$\delta + \epsilon <1$, there exists $k_\epsilon$ such that
$D(T(x_k)) < \delta + \epsilon<1$ for all $k>k_\epsilon$. This
yields by (\ref{a2}) and (\ref{a10}) that $D(S_k) \to 0$. As a
sequence of stochastic matrices, $\{S_k\}_{k\in \mathbb{N}}$
contains convergent subsequences, each of which converges to a
stochastic matrix $S$ with strictly positive elements, for which
$D(S)=0$. By the Perron-Frobenius theorem there exists only one such
limit.
\end{proof}
For a given $k\in \mathbb{N}$, let $d(Y_k)$ be the {\it diameter} of
$Y_k$, that is,
\begin{equation*}
 d(Y_k) := \max_{i,j=1,2,3} (Y_k^i - Y_k^j).
\end{equation*}
It is known that, see \cite{Gau}, for each $Y_1$,
\begin{equation*}
 d(S_k Y_1) \leq D(S_k) d(Y_1),
\end{equation*}
which yields $d(Y_k) \to 0$ as $k\to \infty$ if $\{T(x_k)\}_{k\in
\mathbb{N}}$ is ergodic. On the other hand, for each $\epsilon >0$,
one finds $Y_1$ with positive entries such that
\begin{equation}
  \label{a3c}
d(S_k Y_1) > [D(S_k) - \epsilon] d(Y_1).
\end{equation}
As possible limits of $\{x_k\}_{k\in \mathbb{N}}$, there can appear
the solutions of the following equation:
\begin{equation}
  \label{a4}
  x = t \phi(x), \quad x>0.
\end{equation}
First, let us consider the case of $L<0$ where the interaction along
the random bonds is antiferromagnetic. Then $t<1$, see (\ref{A16}),
and the only solution of (\ref{a4}) is
\begin{equation}
  \label{XY13}
 x_* = \frac{ - 3 - t + \sqrt{ 9 + 22 t - 15 t^2}}{ 2 (1-t)} < 1,
\end{equation}
which is clearly positive.  By (\ref{A18}) we have
\begin{equation*}
\phi' (x) = 1 - \left(\frac{4}{x+3} \right)^2.
\end{equation*}
Therefore, the solution (\ref{XY13}) is stable and $x_k \to x_*$ for
all $x_1>0$. Thus, by Lemma \ref{1lm} the model is in an unordered
state for such $L$ and all $K\in \mathbb{R}$ and $p\in [0,1]$. This
possibly holds due to the frustration caused by the motif the graph
is based on. For $L=0$, i.e., for the graph without decorations, we
have $t=1$ and the only solution of (\ref{a4}) is $x_*=1$. In this
case, the model is in an unordered state for all $K\in \mathbb{R}$.

Let us turn now to the case of $L>0$, in which $t>1$. Now (\ref{a4})
has two solutions
\begin{equation*}
    x^{(1)}_* = \frac{3+t-\sqrt{9+22t-15t^2}}{2(t-1)},
    \qquad x^{(2)}_* = \frac{3+t+\sqrt{9+22t-15t^2}}{2(t-1)},
\end{equation*}
which exist and are distinct provided
\begin{equation}
  \label{t}
  t \in (1, 9/5).
\end{equation}
By direct calculations we get that $t \phi'(x^{(1)}_*)< 1$ and $
t\phi_t'(x^{(2)}_*)> 1$, see Fig. \ref{wykres}. Hence, $x^{(1)}_*$
is stable, whereas  $x^{(2)}_*$ is unstable. This means that
\begin{equation}
  \label{tt}
\lim_{k \rightarrow +\infty} x_k = \left\{ \begin{array}{ll} x^{(1)}_* \quad &{\rm if} \ \ x_1 < x^{(2)}_*; \\[.2cm]
 x^{(2)}_* \quad &{\rm if} \ \ x_1 = x^{(2)}_*; \\[.2cm] +\infty \quad &{\rm if} \ \ x_1 > x^{(2)}_* .\end{array} \right.
\end{equation}
Note also that $x^{(1)}_* \rightarrow 1$ and $x^{(2)}_* \rightarrow
\infty$ as $t \rightarrow 1$.
\begin{figure}[htbp]
\unitlength 0,24mm
\centering
\begin{picture}(150, 180)
    \put(0,20){\vector(1,0){180}}
    \put(20,0){\vector(0,1){190}}
    \put(20,20){\line(1,1){160}}
   \qbezier(20,40)(65,13)(180,190)
    \put(2,36){$\frac{4}{3}t$}
    \put(42,25){$x_*^{(1)}$}
    \put(162,148){$x_*^{(2)}$}
    \put(117,162){$t \phi(x)$}
\end{picture}
\caption{Graphical solution of (\ref{a4})} \label{wykres}
\end{figure}
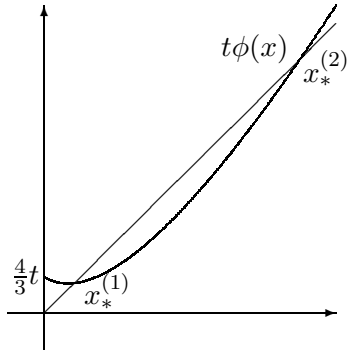
\begin{lemma}
  \label{2lm}
Assume that $x_1>x_*^{(2)}$, and hence $x_k \to \infty$ as $k\to
\infty$. Then the sequence $\{T(x_k)\}_{k\in \mathbb{N}}$ defined in
(\ref{a1}) is not ergodic.
\end{lemma}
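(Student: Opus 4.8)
The plan is to exhibit one column vector $Y_1$ whose iterates $Y_{k+1}=S_kY_1$ do not flatten out, and to leverage the inequality $d(S_kY_1)\le D(S_k)d(Y_1)$ recalled just before the lemma: if $d(Y_k)$ stays bounded away from $0$, then so does $D(S_k)$, and since $D(\cdot)$ is continuous while a stochastic matrix has all rows equal iff its Dobrushin coefficient vanishes, $\{S_k\}$ cannot converge to a stochastic matrix with identical rows; that is, $\{T(x_k)\}$ is not ergodic. I take $Y_1$ to be the transpose of $(0,0,1)$, so $d(Y_1)=1$.

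The first ingredient is that every $T(x)$ with $x>1$ maps the ordered cone $\mathcal C=\{Y:\,Y^1\le Y^2\le Y^3\}$ into itself. This is the stochastic monotonicity of $T(x)$, and from $(\ref{a1})$ it reduces to the elementary inequalities $T_{11}(x)\ge T_{21}(x)\ge T_{31}(x)$ and $T_{11}(x)+T_{12}(x)\ge T_{21}(x)+T_{22}(x)\ge T_{31}(x)+T_{32}(x)$, all valid for $x>1$ (e.g. $T_{21}(x)-T_{31}(x)=\tfrac{x^{2}-x}{x^{2}+4x+3}\ge0$). Since $x_1>x_*^{(2)}>1$ (clear from the formula for $x_*^{(2)}$, using $t\in(1,9/5)$ by $(\ref{t})$), since $\phi$ maps $(1,\infty)$ into itself and $t>1$ so that every $x_k>1$, and since $x_k\to\infty$ by $(\ref{tt})$, we get $Y_k\in\mathcal C$ and $d(Y_k)=Y_k^3-Y_k^1$ for all $k$.

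The second ingredient is an exact recursion. Reading off $(\ref{a1})$ gives $T_{31}(x)-T_{21}(x)=-\mu(x)$, $T_{32}(x)-T_{22}(x)=0$, $T_{33}(x)-T_{23}(x)=\mu(x)$, with
\[
\mu(x):=\frac{x^{2}-x}{x^{2}+4x+3}\in(0,1)\quad\text{for }x>1,
\]
so the middle component cancels and $Y_{k+1}^3-Y_{k+1}^2=\mu(x_k)\bigl(Y_k^3-Y_k^1\bigr)$. Using $Y_{k+1}^1\le Y_{k+1}^2$ this yields $d(Y_{k+1})\ge Y_{k+1}^3-Y_{k+1}^2=\mu(x_k)\,d(Y_k)$, hence $d(Y_k)\ge\prod_{j=1}^{k-1}\mu(x_j)$. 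It remains to check that $P:=\prod_{j\ge1}\mu(x_j)>0$, i.e. $\sum_j\bigl(1-\mu(x_j)\bigr)<\infty$: here $1-\mu(x)=\tfrac{5x+3}{x^{2}+4x+3}=O(1/x)$ as $x\to\infty$, so it suffices that $\sum_j1/x_j<\infty$, which holds because $x_{k+1}=t\,\phi(x_k)$ with $t>1$ (by $(\ref{A16})$, as $L>0$) and $\phi(x)=\tfrac{x^{2}-x+4}{x+3}$ by $(\ref{A18})$ satisfies $\phi(x)/x\to1$, forcing $x_{k+1}\ge\rho\,x_k$ with a fixed $\rho>1$ for all large $k$, hence geometric growth of $x_k$. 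Thus $d(Y_k)\ge P>0$ for all $k$, and the mechanism of the first paragraph gives the claim.

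The hard part is obtaining a genuine lower bound on $D(S_k)$: a priori one only knows that $D(S_k)$ is non-increasing, and when $x_1$ lies just above the unstable fixed point $x_*^{(2)}$ the orbit $\{x_k\}$ lingers near $x_*^{(2)}$ for an arbitrarily long time, during which one multiplies by matrices whose Dobrushin coefficient is a fixed number strictly below $1$, so $D(S_k)$ could in principle be pushed arbitrarily small before $\{x_k\}$ escapes to infinity. The monotone-cone mechanism is exactly what rescues the proof: it upgrades the crude submultiplicative estimate to the sharp multiplicative bound $d(Y_{k+1})\ge\mu(x_k)\,d(Y_k)$, and $1-\mu(x_k)$ is summable precisely because the escape is eventually geometric. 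The only real computations are the cone invariance and the middle-column cancellation, and both are routine.
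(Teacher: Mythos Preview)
Your proof is correct and takes a genuinely different route from the paper's. The paper argues indirectly: it shows that $\gamma_k := T_{23}(x_k)+T_{32}(x_k)=\tfrac{3}{x_k+3}$ is summable (using the linearization $x_{k+1}-x_*^{(2)}\ge \varkappa\,(x_k-x_*^{(2)})$ with $\varkappa=t\phi'(x_*^{(2)})>1$), and then invokes the external infinite-flow criterion of Touri and Nedi\'c to conclude non-ergodicity. Your argument is self-contained: you establish stochastic monotonicity of $T(x)$ for $x>1$, extract the exact identity $Y_{k+1}^3-Y_{k+1}^2=\mu(x_k)(Y_k^3-Y_k^1)$ from the middle-column cancellation, and bound $D(S_k)$ from below by the infinite product $\prod_j\mu(x_j)$, whose positivity follows from the same geometric growth of $x_k$ (you get it via $x_{k+1}/x_k\to t>1$ rather than via linearization at $x_*^{(2)}$). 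What your approach buys is an explicit, constructive lower bound on $D(S_k)$ and independence from the cited black-box theorem; what the paper's approach buys is brevity, at the cost of that external reference. Both proofs ultimately rest on the same quantitative fact, namely that the off-diagonal mass of $T(x_k)$ decays like $1/x_k$ and is therefore summable.
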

\begin{proof}
The proof will be done by showing that the sequence of
\begin{equation*}
  \gamma_k = T_{23}(x_k) + T_{32}(x_k), \quad k\in \mathbb{N},
\end{equation*}
is summable. This will imply that the sequence $\{T(x_k)\}_{k\in
\mathbb{N}}$ fails to have the infinite flow property in the sense
of \cite[Definition 2, page 1479]{Touri1}. This implies in turn, see
\cite[Theorem 1, page 1480]{Touri1} and also \cite{Touri}, the
property in question. Standard linearization yields
\[
x_{k+1} - x_*^{(2)} \geq \varkappa (x_{k} - x_*^{(2)} ), \quad
\varkappa := t \phi'(x_*^{(2)}),
\]
which can be iterated to give
\begin{equation*}
x_{k+1} - x_*^{(2)} \geq \varkappa^k (x_{1} - x_*^{(2)} ).
\end{equation*}
At the same time, by (\ref{a1}) it follows that
\[
\gamma_{k} = \frac{3}{x_k + 3} < \frac{3}{x_k - x_*^{(2)}} \leq
\frac{3 \varkappa^{-(k-1)}}{x_k - x_*^{(2)}} ,
\]
which is summable as $\varkappa >1$.
\end{proof}
By (\ref{a3c}) we obtain from the latter the following
\begin{corollary}
  \label{1rk}
For each $x_1 > x_*^{(2)}$,  there exist $\epsilon >0$ and the
observable $f$ such that, for each $k\in \mathbb{N}$, one finds
$l>k$ with the property $d(Y_l) > \epsilon$. Hence, the model is in
an ordered state.
\end{corollary}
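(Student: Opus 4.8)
The goal is to exhibit, for the given $x_1>x_*^{(2)}$, one admissible observable $f$ --- equivalently one strictly positive initial vector $Y_1=(Y_1^1,Y_1^2,Y_1^3)$, since prescribing three positive values on the orbits of the $b\leftrightarrow c$ exchange and the spin flip defines an $f$ obeying (\ref{A8a}) --- together with a constant $\epsilon>0$ such that $d(Y_l)>\epsilon$ for infinitely many $l$. This finishes the proof: were the model unordered, then for this $f$ the limit $\lim_l F_l(f|a,b,c)$ would exist and be independent of $b,c$, so by (\ref{A9}) the entries $Y_l^1,Y_l^2,Y_l^3$ would converge to a common value, forcing $d(Y_l)\to0$ and contradicting $d(Y_l)>\epsilon$ infinitely often; hence by Definition \ref{1df} the model is in an ordered state. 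The two inputs are the lower bound behind (\ref{a3c}), which ties $d(S_kY_1)$ to the Dobrushin coefficient $D(S_k)$, and a strengthening of Lemma \ref{2lm} to the statement $\delta_0:=\lim_k D(S_k)>0$.

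First I would record that $\{D(S_k)\}_{k\in\mathbb{N}}$ is non-increasing: from $S_{k+1}=T(x_{k+1})S_k$, inequality (\ref{a10}) and $D(T(x_{k+1}))\le1$ give $D(S_{k+1})\le D(S_k)$, so $\delta_0=\lim_k D(S_k)=\inf_k D(S_k)$ exists. The crux is $\delta_0>0$, and here I would reuse the mechanism of Lemma \ref{2lm}. Since $x_k\to+\infty$ and, by the linearization in that proof, $x_k$ grows geometrically, the matrix $T(x_k)$ in (\ref{a1}) converges entrywise to the idempotent stochastic matrix $T_\infty$ with rows $(1,0,0)$, $(1,0,0)$, $(0,0,1)$, with $\|T(x_k)-T_\infty\|=O(x_k^{-1})$; in particular the total flow across the partition $\{1,2\}\mid\{3\}$, namely $\sum_k\bigl([T(x_k)]_{13}+[T(x_k)]_{31}+\gamma_k\bigr)$ with $\gamma_k=3/(x_k+3)$, is finite. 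By the results of \cite{Touri,Touri1}, the failure of this infinite flow together with the (trivial) ergodicity of the induced sub-chains on $\{1,2\}$ and on $\{3\}$ forces $S_k$ to converge to a stochastic matrix of rank two, whose third row differs from its first two; by continuity of $D$ this yields $\delta_0=\lim_k D(S_k)>0$. Upgrading non-ergodicity to this uniform positive lower bound on $D(S_k)$ is the step I expect to be the main obstacle; the entrywise estimates on (\ref{a1}) used along the way are routine.

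With $\delta_0>0$ in hand the remainder is bookkeeping. For each $k$ one has $D(S_k)=\max_{y\in[0,1]^3}\bigl(\max_i(S_ky)_i-\min_i(S_ky)_i\bigr)$, and this maximum is attained at one of the $0/1$ vectors; up to the irrelevant antipodal symmetry there are only three of them, so by pigeonhole some fixed such vector $y^*$ satisfies $d(S_k y^*)=D(S_k)\ge\delta_0$ for infinitely many $k$. Put $Y_1:=y^*+\eta\mathbf{1}$ with a small $\eta>0$; then $Y_1$ has strictly positive entries, $d(Y_1)=d(y^*)=1$, and, since $S_k\mathbf{1}=\mathbf{1}$, one has $d(S_kY_1)=d(S_ky^*)$. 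Taking $f$ to be the admissible observable with $(f(1,1,1),f(1,1,-1),f(1,-1,-1))=Y_1$ and $\epsilon:=\delta_0/2$, we obtain from (\ref{a3}) that $d(Y_{k+1})=d(S_kY_1)\ge\delta_0>\epsilon$ for infinitely many $k$; equivalently, for every $k$ there is $l>k$ with $d(Y_l)>\epsilon$, which is the asserted property, and the proof is concluded as in the first paragraph. (One may replace this pigeonhole step by a direct appeal to (\ref{a3c}) with $\epsilon$ there set to $\delta_0/2$: the resulting $Y_1$ then works for all sufficiently large $k$, which again suffices.)
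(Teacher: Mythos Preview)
Your argument is correct and follows the same route the paper intends: the corollary is derived from Lemma~\ref{2lm} through the Dobrushin coefficient, via the diameter bound behind (\ref{a3c}). The paper's own proof is literally the single clause ``By (\ref{a3c}) we obtain from the latter the following,'' so your write-up is not a different approach but a substantive fleshing-out of that sentence. In particular you make explicit two points the paper leaves tacit: first, that one needs $\delta_0=\lim_k D(S_k)>0$ (not merely non-ergodicity), and second, that (\ref{a3c}) as stated produces a $Y_1$ depending on $k$, whence some device---your pigeonhole over the three $0/1$ maximizers, or a uniform reading of (\ref{a3c})---is needed to get a \emph{single} observable $f$.

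One refinement: the convergence of $S_k$ to a rank-two stochastic matrix is not a direct corollary of the infinite-flow results in \cite{Touri,Touri1}; those give non-ergodicity, not convergence. What actually does the work is the entrywise analysis you call ``routine'': from (\ref{a1}) one has $1-[T(x)]_{11}=O(x^{-2})$ and $1-[T(x)]_{33}=O(x^{-1})$, so with $x_k$ growing geometrically (as in the proof of Lemma~\ref{2lm}) rows $1$ and $3$ of $S_k$ are Cauchy, while $[T(x)]_{21}\to1$ forces row $2$ to coalesce with row $1$. Thus $S_k\to S_\infty$ with rows $(\rho_1,\rho_1,\rho_3)$, and $\rho_1\neq\rho_3$ because equality would contradict the non-ergodicity established in Lemma~\ref{2lm}. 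This yields $\delta_0=D(S_\infty)>0$ directly, and your pigeonhole (or the alternative via (\ref{a3c})) then finishes exactly as you wrote.
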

The unstable solution $x_*^{(2)}$ corresponds to the {\it critical
point} which separates two basins of attraction, see (\ref{tt}).
Note that the sequence $\{T(x_k)\}_{k\in \mathbb{N}}$ is still
ergodic since all the entries of $ T(x_*^{(2)})=T(x_k)$, $k\in
\mathbb{N}$, are strictly positive.

Now let us turn to the condition (\ref{t}). It yields
\begin{equation*} 
p < \psi (L) := \frac{3-\sqrt{5}}{\sqrt{5}\exp(L) - 3\exp(-L) +
3-\sqrt{5}}.
\end{equation*}
As $\psi (L)$ is a decreasing function,  the equation $\psi (L) = 1$
has a unique solution
\begin{equation} \label{m20}
L_* = \frac{1}{4} \ln \frac{9}{5}.
\end{equation}
For $L< L_*$, one has $\psi (L) > 1$, which means that  $t< 9/5$ for
all $p \in (0,1]$. For such $L$ and $p$, let $K_*(L,p)$ be the
solution of the equation
\begin{equation*}
\exp(4 K) =   x^{(2)}_* >1.
\end{equation*}
Then (\ref{tt}) can be rewritten in the form
\begin{equation}
  \label{m22}
 \lim_{k \rightarrow +\infty} x_k = \left\{ \begin{array}{ll} x^{(1)}_* \quad &{\rm if} \ \ K < K_* (L,p); \\[.2cm]
 x^{(2)}_* \quad &{\rm if} \ \ K = K_* (L,p); \\[.2cm] +\infty \quad &{\rm if} \ \ K > 4K_* (L,p).\end{array} \right.
\end{equation}
For $t = 9/5$, we have $x^{(1)}_* = x^{(2)}_* =3$, which corresponds
to $K_* = (\ln 3)/4$. In this case, $x_k \rightarrow 3$ if $K \leq
K_*$, and $x_k \rightarrow +\infty$ if $K > K_*$.

For $L > L_*$ there exists $p_*=\psi(L) < 1$ such, that for $p \in
(0, p_*)$, there exists $K_*(L,p)$ with the properties as in
(\ref{m22}). For $p\in [p_*, 1]$, the whole graph of $t\phi$ lies
above the line $t\phi(x) =x$, which means that $x_k \rightarrow
+\infty$ for all initial $x_1\geq 0$.

The results of the analysis just performed can be summarized in the
following form.
\begin{theorem}
  \label{1tm}
The Ising model  on the graph based on $M_1$ and described by the
Hamiltonian (\ref{A1}) has the following properties related to
Definition \ref{1df}:
\begin{itemize}
  \item[(i)] for $L\leq 0$, it is in an unordered state for all
  values of $K\in \mathbb{R}$ and $p\in [0,1]$;
\item[(ii)] for $L\in (0,L_*]$ as in (\ref{m20}) and $p\in (0,1]$,
there exists $K_*(L,p)>0$ such that the model is in an unordered
state for $K\leq K_*(L,p)$, and in an ordered state for $K>
K_*(L,p)$; for $K= K_*(L,p)$, the model is in the critical state;
\item[(iii)] for $L>L_*$, there exists $p_*\in (0,1)$ such that, for
$ p< p_*$, there exists $K_*(L,p)$ with the properties as in item
(ii); for $p\in [p_*,1]$, the model is in an ordered state for all
$K$.
\end{itemize}
\end{theorem}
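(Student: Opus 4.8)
The plan is to derive Theorem~\ref{1tm} by assembling the ingredients prepared above, the key reduction being that, via (\ref{a}) and (\ref{a3}) together with Definition~\ref{1df}, the model is in an unordered state exactly when the sequence $\{T(x_k)\}_{k\in\mathbb{N}}$ is ergodic. Indeed, ergodicity means $S_k\to S$ with identical rows, so $Y_{k+1}=S_kY_1$ converges to a vector with equal entries for every admissible $Y_1$, i.e.\ $\lim_k F_k(f|a,b,c)$ exists and is independent of $b,c$; conversely, if $\{T(x_k)\}$ is not ergodic then, by (\ref{a3c}) and Corollary~\ref{1rk}, there is an observable $f$ for which $d(Y_k)$ does not tend to $0$, i.e.\ the state is ordered. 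Thus the whole argument reduces to tracking the asymptotics of the scalar map $x_{k+1}=t\phi(x_k)$, $x_1=e^{4K}$, with $\phi$ as in (\ref{A18}) (so that $\phi'(x)=1-(4/(x+3))^2$) and $t=t(L,p)$ as in (\ref{A16}); I would first record that $t$ is increasing in $p$, that $t<1$ when $L<0$ and $p>0$, that $t>1$ precisely when $L>0$ and $p>0$, and that $t=1$ otherwise (in particular whenever $p=0$).

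For $t\le1$ --- which covers item~(i) (all $L\le0$, all $p$) and, in passing, the trivial case $p=0$ --- I would note that $|\phi'(x)|<1$ on $[0,\infty)$, so $t\phi$ is a contraction on the invariant half-line $[t,\infty)$; hence (\ref{a4}) has the single positive root (\ref{XY13}) (equal to $1$ when $t=1$), and $x_k\to x_*>0$ for every $x_1>0$. Lemma~\ref{1lm} then gives ergodicity, so the model is unordered for all $K\in\mathbb{R}$ and all $p\in[0,1]$, which is item~(i). For $t>1$ (that is, $L>0$ and $p>0$), equation (\ref{a4}) is the quadratic $(t-1)x^2-(3+t)x+4t=0$ with discriminant $9+22t-15t^2$, and I would split the analysis according to $t\in(1,9/5)$, $t=9/5$, or $t>9/5$.

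In the case $t\in(1,9/5)$ there are two positive roots $x_*^{(1)}<x_*^{(2)}$ with $x_*^{(2)}>1$, and a short computation with $\phi'$ gives $t\phi'(x_*^{(1)})<1<t\phi'(x_*^{(2)})$, whence the trichotomy (\ref{tt}). Setting $K_*(L,p):=\tfrac14\ln x_*^{(2)}>0$, I conclude: for $K<K_*$ one has $x_1<x_*^{(2)}$, so $x_k\to x_*^{(1)}>0$ and by Lemma~\ref{1lm} the state is unordered; for $K=K_*$ the sequence is constant at $x_*^{(2)}$, and since all entries of $T(x_*^{(2)})$ are strictly positive, $\{T(x_k)\}$ is still ergodic --- this is the critical state separating the two basins in (\ref{tt}); for $K>K_*$ one has $x_1>x_*^{(2)}$, so Lemma~\ref{2lm} and Corollary~\ref{1rk} give an ordered state. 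The borderline $t=9/5$ is degenerate: there $t\phi(x)-x=\tfrac45(x-3)^2/(x+3)\ge0$, so $x=3$ is a parabolic fixed point with $x_k\to3$ for $x_1\le3$ and $x_k\to+\infty$ for $x_1>3$, yielding the same conclusions with $K_*=\tfrac14\ln3$. Finally, for $t>9/5$ the quadratic is everywhere positive, so $t\phi(x)>x$, the sequence $x_k$ is strictly increasing, and since $t\phi(x)/x\to t>1$ it grows at least geometrically to $+\infty$; then $\gamma_k=T_{23}(x_k)+T_{32}(x_k)=3/(x_k+3)$ is summable, the argument of Lemma~\ref{2lm} applies, and the model is ordered for all $K$.

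The last step is to read off the $(L,p)$-regions using the decreasing function $\psi$ and the value $L_*$ of (\ref{m20}), for which $\psi(L_*)=1$. For $L\in(0,L_*]$ one has $p\le1\le\psi(L)$ for all $p\in(0,1]$, hence $t\le9/5$, and the analysis above provides $K_*(L,p)>0$ with the properties of item~(ii) (at the corner $L=L_*$, $p=1$ one has $t=9/5$ and $K_*=\tfrac14\ln3$). For $L>L_*$ one sets $p_*:=\psi(L)\in(0,1)$; then $p<p_*$ forces $t<9/5$, reproducing item~(ii), while $p\in[p_*,1]$ forces $t\ge9/5$, so the graph of $t\phi$ lies on or above the diagonal, $x_k\to+\infty$, and the model is ordered for every $K$ --- this is item~(iii). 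I expect the only point requiring genuine work beyond quoting the earlier lemmas to be the non-ergodicity in the regime $t\ge9/5$: there is no finite repelling fixed point to feed into Lemma~\ref{2lm}, so one must instead establish the geometric growth of $x_k$ directly (from $t\phi(x)/x\to t>1$) in order to get the summability of $\gamma_k$; the stability inequalities $t\phi'(x_*^{(1)})<1<t\phi'(x_*^{(2)})$ and the identification of the critical value are explicit but routine, and the translation from $t$-conditions to $(L,p)$-conditions via $\psi$ is elementary monotonicity bookkeeping.
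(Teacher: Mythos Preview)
Your argument tracks the paper's own derivation almost exactly: the theorem there is stated as a summary of the preceding fixed-point analysis of $x_{k+1}=t\phi(x_k)$, and you recover that analysis step by step, invoking Lemmas~\ref{1lm} and~\ref{2lm} and Corollary~\ref{1rk} in the same places and translating the threshold $t=9/5$ into the $(L,p)$-plane via $\psi$ just as the paper does. Your one substantive addition is welcome: for $t>9/5$ the paper simply asserts $x_k\to\infty$ and an ordered state, but Lemma~\ref{2lm} as written linearizes around the repelling fixed point $x_*^{(2)}$, which no longer exists; your direct geometric-growth argument (from $t\phi(x)/x\to t>1$) supplies the missing summability of $\gamma_k$ and genuinely fills a gap the paper leaves implicit. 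One minor slip, which you inherit from the paper: at the boundary $p=p_*$ (i.e.\ $t=9/5$) your own parabolic analysis correctly gives $x_k\to3$ for $x_1\le3$, hence an unordered state for $K\le\tfrac14\ln3$, so your final clause about item~(iii) should read $p\in(p_*,1]$ rather than $[p_*,1]$.
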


\section{Concluding Remarks}

In this article, we introduce hierarchical random graphs based on
motifs presented in Fig. \ref{motifs}. The construction principles
resemble those used in \cite{HinczBerker2006}: a nonrandom skeleton
(hierarchical diamond lattice in \cite{HinczBerker2006}) is
accompanied by random bonds. In our case they repeat the motif used
in the construction. As a result, the motif appears at each
hierarchical level. The construction is performed in a rigorous way
and is illustrated by an informal description. The analysis of the
node degree distribution in the constructed graphs is based on
characteristic functions obtained in an explicit form. For $p>0$,
these functions are meromorphic for all motifs. This means that, for
all our graphs, the node degree as a random variable has all moments
with the property $\langle n^m \rangle \sim C^m m!$. Thus, the
degree distributions are intermediate as compared to the Poisson and
scale-free cases. Such properties as clustering and small world
property are studied only for $p=0,1$. In particular, it turns out
that for all motifs the small world property is absent for $p=0$ and
present for $p=1$. Thus, it would be interesting to find out how and
at which value of $p$ is emerges. In Theorem \ref{1tm}, we analyze
phase transitions in the Ising model based on motif $M_1$. Unlike to
\cite{HinczBerker2006} in our case the Ising model has no phase
transition for $p=0$, which manifests the difference between our
construction and that used in \cite{HinczBerker2006}. We also show
that, for $L\leq L_*$, the model is in an unordered state whenever
$K=0$, i.e., the spin-spin interactions along the nonrandom bonds is
absent.  For $L>L_*$ and $p\geq p_*$, the model is in an ordered
state even for $K=0$. We plan to study the phase diagram of this
model in the $(K,h)$-plane in a separate work, where we also plan to
consider such problems for the graphs based on the remaining motifs.

\vskip.2cm \noindent \textbf{Acknowledgment:} This work was
supported in part by the DFG through the SFB 701: `Spektrale
Strukturen und Topologische Methoden in der Mathematik' and by the
European Commission under the project STREVCOMS PIRSES-2013-612669.

\section{Appendix}

Here we give detailed calculations of the quantities from subsection
\ref{SS24}. First we get the quantity in (\ref{nK1}):
\begin{eqnarray}
  \label{nk1}
 \langle n_k \rangle & = &  \sum_{l=1}^{k-1} \sum_{\nu =0}^{2(q-1)(l-1)} ( 2(q-1) + \nu) \frac{(q-1)\cdot q^{-l}}{ 1 + q^{1-k}}\cdot \\[.17cm] \nonumber
&&\cdot{\genfrac{(}{)}{0pt}{}{2(q-1)(l-1)}{\nu}} p^\nu (1-p)^{2(q-1)(l-1)-\nu} + \\[.2cm]
& + & \sum_{\nu=0}^{(q-1) (k-1)} (q-1+ \nu) \frac{2}{q^{k-1} + 1}  \nonumber \\[.17cm]
&\times & {\genfrac{(}{)}{0pt}{}{(q-1)(k-1)}{\nu}} p^\nu (1-p)^{(q-1)(k-1)-\nu} =\nonumber \\[.2cm]
& = & \frac{2(q-1)(1-p)(1-q^{1-k})}{1+q^{1-k}}+ \frac{2p(kq^{1-k}(1-q)+q-q^{1-k})}{1+q^{1-k}}+\nonumber \\[.2cm] \nonumber
& + & \frac{2(q-1+(q-1)(k-1)p)}{q^{k-1}+1} =
\frac{q^{k-1}(2q-2+2p)-2p}{q^{k-1}+1}.
\end{eqnarray}
Next, for the quantity in (\ref{nKK}), we have
\begin{eqnarray*}
 \langle n_k \rangle & = &  \sum_{l=1}^{k-1} \sum_{\nu =0}^{4(l-1)} ( 4 + \nu) \frac{3 \cdot 4^{-l}}{ 1 + 4^{1-k}}\cdot {
\genfrac{(}{)}{0pt}{}{4(l-1)}{\nu}} p^\nu (1-p)^{4(l-1)-\nu}+ \nonumber\\[.2cm] & + & \sum_{\nu=0}^{2 (k-1)} (2+ \nu) \frac{2}{4^{k-1} + 1} \cdot {\genfrac{(}{)}{0pt}{}{2(k-1)}{\nu}} p^\nu (1-p)^{2(k-1)-\nu}= \\[.2cm] & = &
 4+\frac{4}{3}\left(p-\frac{3+2p}{4^{k-1}+1}\right). \nonumber
\end{eqnarray*}
Now we calculate the quantity in (\ref{nKK1})
\begin{eqnarray*}
 \langle n_k \rangle & = &  \sum_{l=1}^{k-1} \Big[
\sum_{\nu =0}^{3(l-1)} ( 3 + \nu) \frac{  4^{-l}}{ 1 + 4^{1-k}}\cdot
{
\genfrac{(}{)}{0pt}{}{3(l-1)}{\nu}} p^\nu (1-p)^{3(l-1)-\nu} +\nonumber\\[.2cm]
 & + & \quad \sum_{\nu =0}^{4(l-1)} ( 4 + \nu) \frac{ 4^{-l}}{ 1 + 4^{1-k}}\cdot {
\genfrac{(}{)}{0pt}{}{4(l-1)}{\nu}} p^\nu (1-p)^{4(l-1)-\nu} +\nonumber\\[.2cm]
& + & \quad \sum_{\nu =0}^{5(l-1)} ( 5 + \nu) \frac{ 4^{-l}}{ 1 +
4^{1-k}}\cdot {
\genfrac{(}{)}{0pt}{}{5(l-1)}{\nu}} p^\nu (1-p)^{5(l-1)-\nu} \Big]+\nonumber\\[.2cm] & + & \sum_{\nu=0}^{2 (k-1)} (2+ \nu) \frac{4}{4^k + 4} \cdot {\genfrac{(}{)}{0pt}{}{2(k-1)}{\nu}} p^\nu (1-p)^{2(k-1)-\nu}+ \\[.2cm]
& + & \sum_{\nu=0}^{3 (k-1)} (3+ \nu) \frac{2}{4^k + 4} \cdot {\genfrac{(}{)}{0pt}{}{3(k-1)}{\nu}} p^\nu (1-p)^{3(k-1)-\nu} +\nonumber \\[.2cm]
& + &  \sum_{\nu=0}^{ k-1} (1+ \nu) \frac{2}{4^k + 4} \cdot {\genfrac{(}{)}{0pt}{}{k-1}{\nu}} p^\nu (1-p)^{k-1-\nu} \nonumber =\\[.2cm]
& = &
 4+\frac{4}{3}\left(p-\frac{3+2p}{4^{k-1}+1}\right). \nonumber
\end{eqnarray*}

\end{document}